\newtheorem{obs}{Observation}
\newlist{mylistenv}{enumerate}{3}
\newenvironment{mylist}[1]{
    \setlist[mylistenv]{ leftmargin  = 2\parindent, label=#1\arabic{mylistenvi},ref=#1\arabic{mylistenvi}} 
    \setlist[mylistenv,2]{label=#1\arabic{mylistenvi}.\arabic{mylistenvii},ref=#1\arabic{mylistenvi}.\arabic{mylistenvii}}
    \setlist[mylistenv,3]{label=#1\arabic{mylistenvi}.\arabic{mylistenvii}.\arabic{mylistenviii}.,ref=#1\arabic{mylistenvi}.\arabic{mylistenvii}.\arabic{mylistenviii}}
    \renewenvironment{mylist}{\begin{mylistenv}}{\end{mylistenv}}
    \begin{mylistenv}
    }{
    \end{mylistenv}
}
\newcommand{\Rec}{\textup{Rec}}
\begin{document}

\title{ On Geometric Shape Construction via Growth Operations}

\author{Nada Almalki
\and
Othon Michail
\orcidID{0000-0002-6234-3960} 
}

\authorrunning{N. Almalki and O. Michail}

\institute{Department of Computer Science, University of Liverpool, UK\\
\email{\{N.Almalki, Othon.Michail\}@liverpool.ac.uk}\\
}

\maketitle            

\begin{abstract}
In this work, we investigate novel algorithmic growth processes. In particular, we propose three growth operations, \emph{full doubling}, \emph{RC doubling} and \emph{doubling}, and explore the algorithmic and structural properties of their resulting processes under a geometric setting. In terms of modeling, our system runs on a 2-dimensional grid and operates in discrete time-steps. The process begins with an initial shape $S_I=S_0$ and, in every time-step $t \geq 1$, by applying (in parallel) one or more growth operations of a specific type to the current shape-instance $S_{t-1}$, generates the next instance $S_t$, always satisfying $|S_t| > |S_{t-1}|$.
Our goal is to characterize the classes of shapes that can be constructed in $O(\log n)$ or polylog $n$ time-steps and determine whether a final shape $S_F$ can be constructed from an initial shape $S_I$ using a finite sequence of growth operations of a given type, called a \emph{constructor of $S_F$}.

For \emph{full doubling}, in which, in every time-step, every node generates a new node in a given direction, we completely characterize the structure of the class of shapes that can be constructed from a given initial shape. For \emph{RC doubling}, in which complete columns or rows double, our main contribution is a linear-time centralized algorithm that for any pair of shapes $S_I$, $S_F$ decides if $S_F$ can be constructed from $S_I$ and, if the answer is yes, returns an $O(\log n)$-time-step constructor of $S_F$ from $S_I$. For the most general \emph{doubling} operation, where up to individual nodes can double, we show that some shapes cannot be constructed in sub-linear time-steps and give two universal constructors of any $S_F$ from a singleton $S_I$, which are efficient (i.e., up to polylogarithmic time-steps) for large classes of shapes. Both constructors can be computed by polynomial-time centralized algorithms for any shape $S_F$.

\keywords{Centralized algorithm \and Geometric growth operations \and Programmable matter \and Constructor.}
\end{abstract}

\section{Introduction }
\label{sec:intro}
The realization that many natural processes are essentially algorithmic, has fueled a growing recent interest in formalizing their algorithmic principles 
and in developing new algorithmic approaches and technologies inspired by them. Examples of algorithmic frameworks inspired by biological and chemical systems are population protocols~\cite{angluin2006computation,angluin2007computational,michail2016simple}, ant colony optimization~\cite{cornejo2014task,doty2012theory}, DNA self-assembly~\cite{doty2012theory,rothemund2006folding,rothemund2000program,woods2019diverse}, and the algorithmic theory of programmable matter~\cite{akitaya2021universal,almethen2020pushing,michail2019transformation}.

Motivated by these advancements and by principles of biological development which are apparently algorithmic, we introduce a set of geometric growth processes and study their algorithmic and structural properties. These processes start from an initial shape of nodes $S_I$, possibly a singleton, and by applying a sequence of growth operations eventually develop into well-defined global geometric structures. The considered growth operations involve at most one new node being generated by any existing node in a given direction and the resulting reconfiguration of the shape as a consequence of a set of nodes being generated within it. This node-generation primitive is also inspired by the self-replicating capabilities of biological systems, such as cellular division, their higher-level processes such as embryogenesis \cite{chan19molecular}, and by the potential of the future development of self-replicating robotic systems.

In a recent study, Mertzios \textit{et al.}~\cite{mertzios2021complexity} investigated a network-growth process at an abstract graph-theoretic level, free from geometric constraints. Our goal here is to study similar growth processes under a geometric setting and show how these can be fine-tuned to construct interesting geometric shapes efficiently, i.e., in polylogarithmic time-steps. Aiming to focus exclusively on the effect of growth operations, apart from local growth we do not allow any other form of shape reconfigurations. Preliminary such growth processes, mostly for rectangular shapes, were developed by Woods \textit{et al.}~\cite{woods2013active}. Their approach was to first grow such shapes in polylogarithmic and to then transform them into arbitrary geometric shapes and patterns through additional reconfiguration operations, the latter essentially capturing properties of molecular self-assembly systems. Like them, we study the problem of constructing a desired final shape $S_F$ starting from an initial shape $S_I$ via a sequence of shape-modification operations. However, in this work the considered operations are only local growth operations. To the best of our knowledge, the structural characterization and the underlying algorithmic complexity of constructing geometric shapes by growth operations, have not been previously considered as problems of independent interest.

\subsection{Our Approach and Contribution}\label{subsec:our-approach}

In this work, our main objective is to study \emph{growth operations} in a centralized geometric setting. Applying a sequence of such operations in a centralized way, yields a centralized geometric growth process.
Our model can be viewed as an applied, geometric version of the abstract network-growth model of Mertzios \textit{et al.}~\cite{mertzios2021complexity}. 
The considered model is discrete and operates on a 2D square grid. 

Connectivity preservation is an essential aspect of both biological and of the so-inspired robotic and programmable systems, because it allows the system to maintain its strength and coherence and enables sharing of resources between devices in the system. In light of this, all the shapes discussed in this work are assumed to be connected and the considered growth operations cannot break the shape's connectivity.
For all types of considered operations, the study revolves around the following main questions: (i) \emph{What is the class of shapes that can be constructed efficiently from a given initial shape via a sequence of growth operations?} (ii) \emph{Is there a polynomial-time centralized algorithm that can decide if a given target shape $S_F$ can be constructed from a given initial shape $S_I$ and, whenever the answer is positive, return an efficient constructor of $S_F$ from $S_I$?} 

The growth operations considered in this paper are characterized by the following additional properties:
\begin{itemize}
    \item In general, more than one growth operation can be applied at the same time-step (parallel version). To simplify the exposition of some of our results and without losing generality, we shall sometimes restrict attention to a single operation per time step (sequential version).
    \item To avoid having to deal with colliding operations, we restrict attention to single-direction growth operations. That is, for each time-step $t$, a direction $d \in \{$north, east, south, west$\}$ is fixed and any operation at $t$ must be in direction $d$.
    For clarity of presentation of the results in this work, we shall focus mostly on the \emph{east} and \emph{north} directions of the considered operations. Due to the nature of these operations, generalizing to all four directions is immediate.
\end{itemize}

We study three growth operations, \emph{full doubling}, \emph{RC doubling}, and \emph{doubling}, where full doubling is the most restricted and doubling the most general one. In \emph{full doubling}, in every time-step, every node generates a new node in a given direction, in \emph{RC doubling}, complete columns or rows double, and in \emph{doubling} up to individual nodes can double.

For \emph{full doubling}, we completely characterize the structure of the class of shapes that are reachable from any given initial shape. For \emph{RC doubling}, our main contribution is a linear-time centralized algorithm that for any pair of shapes $S_I$, $S_F$ decides if $S_F$ can be constructed from $S_I$ and, if the answer is yes, returns an $O(\log n)$-time-step constructor of $S_F$ from $S_I$. For \emph{doubling}, we show that some shapes cannot be constructed in sub-linear time-steps and give two universal constructors of any $S_F$ from a singleton $S_I$, which are efficient (i.e., up to polylogarithmic time-steps). 
for large classes of shapes. Both constructors can be computed by polynomial-time centralized algorithms for any shape $S_F$.\footnote{Note that there are two distinct notions of time used in this paper. One represents the time-steps of a growth process, while the other represents the running time of a centralized algorithm deciding reachability between shapes and returning constructors for them. We shall always distinguish between the two by calling the former \emph{time-steps} and the latter \emph{time}.}

In Section~\ref{subsec:related-work}, we discuss the related literature. Section~\ref{sec:Models and Preliminaries} presents all definitions that are used throughout the paper.
Sections~\ref{sec:full-doubling}, \ref{sec:RC-doubling}, and \ref{sec:doubling} present our results for \emph{full doubling}, \emph{RC doubling}, and \emph{doubling}, respectively. Finally, in Section~\ref{sec:conclusion}, we also give further research directions opened by our work.

\subsection{Related work}
\label{subsec:related-work}
Recent work has focused on studying the algorithmic principles of reconfiguration, with the potential of developing artificial systems that will be able to modify their  physical  properties, such  as  reconfigurable  robotic ensembles and self-assembly systems. For example, the area of algorithmic self-assembly of DNA aims to understand how to train molecules to modify themselves while also controlling their own growth~\cite{doty2012theory}. Several theoretical models of programmable matter have been developed, including DNA self-assembly and other passively dynamic models~\cite{doty2012theory,michail2018terminating} as well as models enriched with active molecular components~\cite{woods2013active}.

One example of a geometric programmable matter model, which is presented in~\cite{derakhshandeh2014brief}, is known as the \textit{Amoebot} and is inspired by amoeba behavior. In particular, programmable matter is modeled as a swarm of distributed autonomous self-organizing entities that operate on a triangular grid. Research on the Amoebot model has made progress on understanding its computational power and on developing algorithms for basic reconfiguration tasks such as coating~\cite{derakhshandeh2017universal} and shape formation~\cite{derakhshandeh2016universal,di2020shape}. Other authors have investigated cycle-shaped programmable matter modules that can rotate or slide a device over neighboring devices through an empty space~\cite{dumitrescu2004pushing,dumitrescu2004formations,michail2019transformation,connor2021centralised}, with the goal of capturing the global reconfiguration capabilities of local mechanisms that are feasible to be implemented with existing technology. The authors in~\cite{michail2019transformation} proved that the decision problem of transformation between two shapes is in \textbf{P}. In addition, another recent research work~\cite{almethen2020pushing} investigated a linear-strength mechanism through which a node can push a line of one or more nodes by one position in a single time-step. Other linear-strength mechanisms are the one by Woods \textit{et al.}~\cite{woods2013active}, where a node can rotate a whole line of connected nodes, simulating arm rotation, or the one by Aloupis \emph{et al.} \cite{aloupis2008reconfiguration} on crystalline robots, equipped with powerful lifting capabilities where a single robot can lift by one position a line of other robots.

A recent study in the field of highly dynamic networks, which is presented in~\cite{mertzios2021complexity}, is partially inspired by the abstract-network approach followed in~\cite{michail2020distributed}. The authors completely disregard geometry and develop a network-level abstraction of programmable matter systems. Their model starts with a single node and grows the target network $G$ to its full size by applying local operations of node replication. Local edges are only activated upon a node's generation and can be deleted at any time but contribute negatively to the edge-complexity of the construction. The authors develop centralized algorithms that generate basic graphs such as paths, stars, trees, and planar graphs and prove strong hardness results. We similarly focus on centralized structural and algorithmic characterizations as a first step that will promote our understanding of such novel models and will facilitate the future development of more applied constructions, like fully distributed ones.

\section{Model and Preliminaries}\label{sec:Models and Preliminaries}

The programmable matter systems considered in this paper operate on a 2-dimensional square grid. Each grid position (cross point) is identified  by its $x$ and $y$ coordinates, $x \geq 0$ representing the row and $y \geq 0$ the column. Systems of this type consist of $n$ nodes that form a connected shape $S$, as in Fig.~\ref{fig:Shape_S}. Each node $u$ of shape $S$ is represented by a circle occupying a position on the grid. Time consists of discrete time-steps and in every time-step $t\geq 0$, zero or more growth operations can occur depending on the type of operation considered. At any given time-step $t$, each node $u \in S$ is determined by its coordinates $(u_x, u_y)$ and no two nodes can occupy the same position at the same time-step. Two distinct nodes $u=(u_x, u_y)$ and $v=(v_x, v_y)$ are \emph{neighbors} if $u_x\in\{v_x-1,v_x+1\}$ and $u_y=v_y$ or $u_y\in\{v_y-1,v_y+1\}$ and $u_x=v_x$, that is, if they are at orthogonal distance one from each other. In that case, we are assuming that, unless explicitly removed, a connection (or edge) $uv$ exists between $u$ and $v$.

\begin{figure}[ht]
\centering 
\includegraphics[width=1.4in]{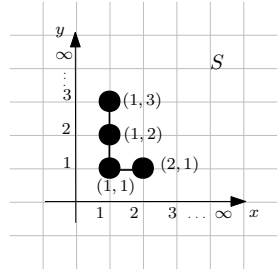}
\caption{An example of a shape $S$ with four nodes $\{(1,1), (1,2),(1,3),(2,1)\}$.}
\label{fig:Shape_S}
\end{figure}

\begin{definition}[Row and Column of a Shape]\label{def:row-column-shape}
A row (respectively column) of a shape $S$ is the set of all nodes of $S$ with the same fixed $y$-coordinate (respectively $x$-coordinate).

By $S_{\cdot, i}$ we denote the row of $S$ consisting of all nodes whose $y$-coordinate is $i$, i.e., $S_{\cdot,i} = \{(x,i) ~|~ (x,i) \in S\}$. 
Similarly, \emph{column} $j$ of $S$, denoted as $S_{j,\cdot}$, is the set of all nodes of $S$ whose $x$-coordinate is $j$, i.e., $S_{j,\cdot} = \{(j,y) ~|~ (j,y) \in S\}$. When the shape $S$ is clear from context, we will refer to $S_{\cdot, i}$ as $R_i$ and to $S_{j,\cdot}$ as $C_j$.
\end{definition}

\begin{definition}[Translation Operation]
Given a set of integer points $Q$, the north (south) $k$-\emph{translation} of $Q$ is defined as $\uparrow_{k} Q =\{(x,y+k) ~|~ (x,y)\in Q\}$ ($\downarrow_{k} Q$, similarly defined). The east (west) $l$-\emph{translation} of $Q$ is defined as $\stackrel{l}\rightarrow Q = \{(x+l,y) ~|~ (x,y)\in Q\}$ ($\stackrel{l}\leftarrow Q$, similarly defined). 
\end{definition}

\begin{definition}[Rigid Connection]\label{def:rigid-connection}
A connection $uv$ between two nodes $u$ and $v$ of a shape $S$ is
\emph{rigid} if and only if a $1$-translation of one node in any direction $d$ implies a $1$-translation of the other in the same direction, unless $uv$ is first removed.
\end{definition}
Throughout, all connections are assumed to be rigid.

The basic concept of growth operation is that a node $u \in S_t $ generates a new node $u' \in S_{t+1}$. In particular, we are exploring three specific growth operations a \emph{full doubling}, \emph{row and column doubling} and \emph{doubling}. In most cases, every node $u \in S_t$ is colored black while its generations are colored gray at the next time-step $t+1$ after any type of growth operations.
Furthermore, any type of growth operation $o$ is equipped with a \emph{linear-strength} mechanism, which is the ability of a generated node $u'$ to translate its connected component on a growing direction in a single time-step.

Throughout this paper, $l, k$ will represent the total number of horizontal and vertical growth operations performed (respectively). By horizontal, the direction $d$ is either \emph{east} or \emph{west}, while the vertical $d$ is either \emph{north} or \emph{south}.

\begin{definition}[Growth Operation]\label{def:operation}
A \emph{growth operation} $o$ is an operation that when applied on a shape instance $S_t$, for all time-steps $t\geq 0$, yields a new shape instance $S_{t+1}=o(S_t)$, such that $|S_{t+1}|>|S_t|$.
\end{definition} 

In this work, we consider three specific types of growth operations moving from the most special to the most general: \emph{full doubling }, \emph{RC doubling} and \emph{doubling} operation.
For the sake of clarity, we will provide a high-level overview of these three operations, with the more technical versions appearing in their respective Sections~\ref{sec:full-doubling},~\ref{sec:RC-doubling} and~\ref{sec:doubling}. First, a \emph{full doubling} operation is a growth operation in which every node $(u \in S_t)$ generates a new node $(u'\in S_{t+1})$, that is, $|S_{t+1}| = 2|S_t|$. Then, \emph{row and column doubling} denoted by \emph{RC doubling}, is a growth operation where in each time-step $t$, a subset of columns (rows) is selected and these are fully doubled. Finally, the most general version of these operations is a \emph{doubling} operation, in which, in each time-step $t$, any subset of the nodes can double in a given direction. 

The differences between these three operations are highlighted in Fig.~\ref{fig:diff-oper}.

\begin{figure}[ht]
\centering 
\includegraphics[width=4in]{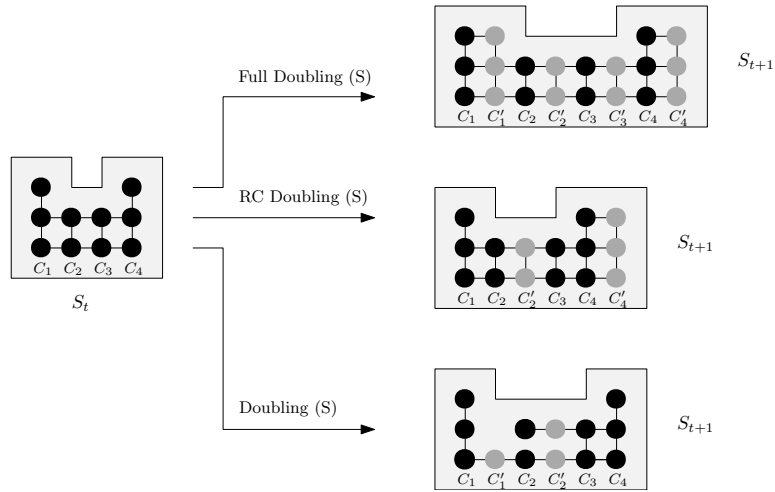}
\caption{Illustration of the results $S_{t+1}$ when applying different growth operations on the same shape $S_{t}$ where the direction of growing is east.}
\label{fig:diff-oper}
\end{figure}

\begin{definition}[Reachability Relation]\label{def:reachability}
Given a growth operation of a given type, we define a reachability relation $\rightsquigarrow$ on pairs of shapes $S$, $S'$ as follows. $S\rightsquigarrow S'$ iff there is a finite sequence $\sigma=o_1,o_2,\ldots,o_{t_{last}}$ of operations of a given type for which $S=S_0,o_1,S_1,o_2,S_2\ldots$ $,S_{({t_{last}}-1)},o_{t_{last}},S_{t_{last}}=S'$, where $S_i=o(S_{i-1})$ for all $1\leq i\leq t_{last}$. Whenever we want to emphasize a particular such sequence $\sigma$, we write $S \stackrel{\sigma}\rightsquigarrow S'$ and say that $\sigma$ \emph{constructs} shape $S'$ from $S$.
\end{definition}

\begin{definition}[Constructor]\label{def:cons-sch}
A \emph{constructor} $\sigma = (o_1, o_2,\ldots,$ $o_{t_{last}})$ is a finite sequence of doubling operations of a given type, $1\leq i\leq t_{last}$. 
\end{definition}

\begin{remark} \label{rem:sd-doubling-process}
Note that in Definition \ref{def:cons-sch} the directions of different $o_i$'s do not need to be the same.
\end{remark}

\subsection{Problem Definitions}\label{subsec:prob-def}
We now formally define the problems to be considered in this paper.\\

\noindent\emph{Class characterization}. Identify the family of shapes $S_F$ that can be obtained from a given initial connected shape $S_I$ via a sequence of growth operations of a given type.\\

\noindent{\sc ShapeConstruction}. Given a pair of shapes ($S_I ,S_F$) decide if $S_I\rightsquigarrow S_F$. If yes, compute a sequence $\sigma$ that constructs $S_F$ from $S_I$. In the special case of this problem in which $S_I$ is by assumption a singleton, we shall assume that the input is just $S_F$.

\section{Full Doubling }\label{sec:full-doubling}
In this section, after providing a formal definition of the full doubling operation, we investigate the class characterization problem under this operation. Note that when the initial shape consists of a single node, i.e., $|S_I| = 1$, the characterization is straightforward which is in Section~\ref{subsection:case_initial_equal1}. Section~\ref{subsection:case_initial_more_than_one_node} discusses the more general case where $|S_I| \geq 1$.

\begin{definition}[Full Doubling]\label{def:full-doubling}
After applying a \emph{full doubling operation} on $S$ a \emph{new shape} $S^{\prime}$ is obtained, depending only on the direction $d$ of the operation:

\begin{enumerate}
    \item If the direction $d$ of a \emph{full doubling operation} is an \emph{east}, then for every column $S_{j,\cdot}$ of $S$ a new column is generated to the \emph{east} of $S_{j,\cdot}$. The effect of applying this to all columns is that every column $S_{j,\cdot}$ of $S$ is translated to the east by $j-1$, such that $S^\prime_{2j-1,\cdot} = \stackrel{j-1}\rightarrow S_{j,\cdot}$, and generates the new column $S^\prime_{2j,\cdot} = \stackrel{j}\rightarrow S_{j,\cdot}$. Therefore, the new shape $S^{\prime}$ of this doubling operation is $S^{\prime}=\bigcup_{j} (S^\prime_{2j-1,\cdot}\cup~ S^\prime_{2j,\cdot})$. 
  
    \item If the direction $d$ of a \emph{full doubling operation} is a \emph{north}, then for every row $S_{\cdot,i}$ of $S$ a new row is generated to the north of $S_{\cdot,i}$. The effect of applying this to all rows is that every row $S_{\cdot,i}$ of $S$ is translated to the north by $i-1$, such that $ S^\prime_{\cdot,2i-1} = \uparrow_{i-1} S_{\cdot,i}$, and generates the new row $S^\prime_{\cdot,2i} = \uparrow_{i} S_{\cdot,i}$. Therefore, the new shape $S^{\prime}$ of this doubling operation is $S^{\prime}=\bigcup_{i} (S^\prime_{\cdot,2i-1}\cup~ S^\prime_{\cdot,2i})$.

\end{enumerate}
\end{definition}
In other words, if a full doubling operation is performed on $S$ in the east direction, then a set of columns equal to the original is generated. Every original column is translated by the number of original columns to its west and its own copy is generated to the east of its final position. Similarly for rows.

Fig.~\ref{fig:column-row-coordinates} depicts an example where $S$ initially consists of two columns  $S_{1,\cdot}$, $S_{2,\cdot}$, after applying a full doubling operation in the east direction
this results in $S'$ where $S_{1,\cdot}$ generates $S_{2,\cdot}$ and $S_{2,\cdot}$ is translated by $1$ to the east and becomes $S_{3,\cdot}$ which creates $S_{4,\cdot}$.

\begin{figure}[ht]
\centering 
\parbox{3.5cm}{
\includegraphics[width=3.5cm]{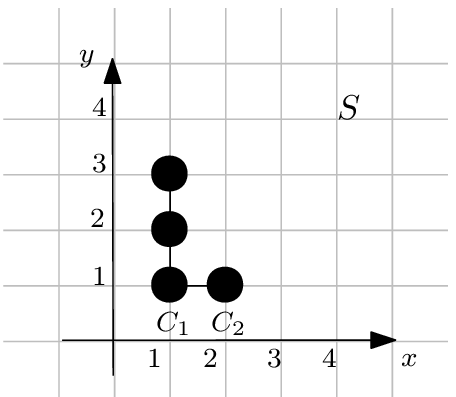}
\subcaption{Initial shape $S$.}}
\qquad
\qquad
\parbox{3.5cm}{
\includegraphics[width=3.5cm]{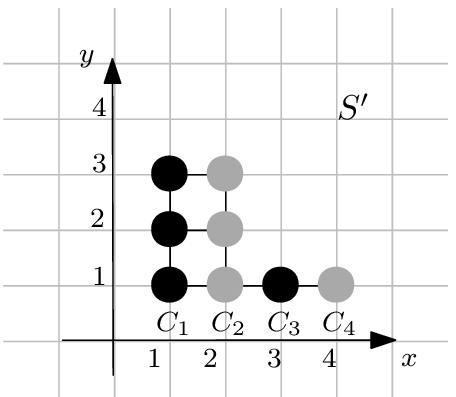}
\subcaption{$S'$ after a full doubling operation.}}
\caption{Illustration example of the new shape $S'$ after applying a full doubling operation $o$ on the columns of $S$, in the east direction.}
\label{fig:column-row-coordinates}
\end{figure}

\subsection{Single-node Initial Shape $|S_I| =1$}\label{subsection:case_initial_equal1}
This section characterizes shapes that can be obtained by a \emph{full doubling} operation that starts with an initial shape of a single node $|S_I| = 1$. The characterization is straightforward and its main purpose is to illustrate the dynamics of the \emph{full doubling }operation.

\begin{definition}[Cartesian Product]\label{def:cartesian-product}
For two sets $X$ and $Y$ we denote by $X \times Y$ the Cartesian product of $X$ and $Y$.
\end{definition}

\begin{definition}[Rectangle]\label{def:rectangle}
A \emph{$(p,q)$-rectangle} is a set of the form $A \times B$, where $A$ and $B$ are sets of \emph{consecutive} integers of cardinally $p$ and $q$ respectively.
A set is called \emph{rectangle} if it is a $(p,q)$-rectangle for some natural numbers $p$ and $q$. Given an integer point $u$ and two natural numbers $p$ and $q$, the \emph{$(p,q)$-rectangle around $u$}, denoted $\Rec(u,p,q)$, is the $(p,q)$-rectangle $\{u_x, u_x+1, \ldots, u_x+p-1\} \times \{u_y, u_y+1, \ldots, u_y+q-1\}$.
\end{definition}

\begin{proposition} \label{prop:sd-doubling-process}
Let $l$ and $k$ be natural numbers. A full doubling constructor with $l$ horizontal and $k$ vertical full doubling operations that starts from a single node constructs a rectangle of size $ 2^{l} \times 2^k$.
\end{proposition}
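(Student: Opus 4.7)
The plan is to prove this by induction on the total number of operations $t = l + k$. The key invariant is that after any full doubling constructor starting from a single node, the current shape is a rectangle of the form $\mathrm{Rec}(u, 2^a, 2^b)$ where $a$ is the number of east/west (horizontal) operations performed so far and $b$ is the number of north/south (vertical) operations.

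For the base case ($t = 0$), the initial shape is a single node, which is $\mathrm{Rec}(u, 1, 1) = \mathrm{Rec}(u, 2^0, 2^0)$. For the inductive step, I would first establish the following lemma: if $S$ is a $(p, q)$-rectangle, then an east full doubling of $S$ produces a $(2p, q)$-rectangle, and a north full doubling produces a $(p, 2q)$-rectangle (symmetrically for west and south). To see the east case: write $S = \mathrm{Rec}(u, p, q)$ with columns $S_{u_x, \cdot}, S_{u_x+1, \cdot}, \ldots, S_{u_x+p-1, \cdot}$, each sharing the same set of $y$-coordinates $\{u_y, u_y+1, \ldots, u_y+q-1\}$. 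Applying Definition~\ref{def:full-doubling} column $j$ (relatively, for $1 \leq j \leq p$) produces the pair of columns $S'_{2j-1, \cdot}$ and $S'_{2j, \cdot}$ at $x$-coordinates $u_x + (2j-2)$ and $u_x + (2j-1)$ respectively, with unchanged $y$-coordinates. Taking the union over $j$ yields exactly $\mathrm{Rec}(u, 2p, q)$. The north case is analogous.

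Given this lemma, the inductive step is immediate: assuming after $t$ operations the shape is a $(2^a, 2^b)$-rectangle (with $a+b = t$), applying a horizontal operation yields a $(2^{a+1}, 2^b)$-rectangle and applying a vertical operation yields a $(2^a, 2^{b+1})$-rectangle. Hence after exactly $l$ horizontal and $k$ vertical operations, in any order, the result is a $(2^l, 2^k)$-rectangle.

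The main obstacle, I expect, is purely notational: unpacking Definition~\ref{def:full-doubling} so as to verify that the collective translation-plus-copy described there produces contiguous $x$-coordinates $u_x, u_x+1, \ldots, u_x + 2p - 1$ when applied to a rectangle (the doubling map $j \mapsto 2j-1, 2j$ on column indices is a bijection onto $\{1, 2, \ldots, 2p\}$). There is no hidden difficulty, since rigid connections (Definition~\ref{def:rigid-connection}) ensure each column moves as a unit and preserves its internal $y$-structure, and no collisions occur because the image coordinates partition the target interval. I would remark that the order of horizontal and vertical operations does not affect the final shape, which is a useful corollary for the subsequent sections.
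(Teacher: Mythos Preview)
Your proposal is correct and follows essentially the same approach as the paper: induction on $t = l+k$, with the inductive step splitting on whether the last operation is horizontal or vertical. You are in fact more careful than the paper, which leaves the claim ``a horizontal full doubling of a $(p,q)$-rectangle yields a $(2p,q)$-rectangle'' implicit, whereas you spell it out via the bijection $j \mapsto \{2j-1, 2j\}$ on column indices; your closing remark that the order of operations is immaterial is exactly the paper's Remark~\ref{remark1}.
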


\begin{proof}
We will prove the statement by induction on the number $t = l + k$ of operations.
For $t = 0$ the statement is trivial.

Let $t > 0$ and assume that the proposition holds for all constructors with less than $t$ full doubling operations.

Let us fix a full doubling constructor $\sigma = (o_1, o_2, \ldots, o_t)$, with exactly $t$ operations and let 
$l$ and $k$ be the number of horizontal and vertical doubling in this constructor respectively. Assume that the last operation in $\sigma$ is 
a horizontal full doubling and consider the constructor $\sigma' = (o_1, o_2, \ldots, o_{t-1})$ consisting of the first $t-1$ operations of $\sigma$.
By the induction hypothesis, constructor $\sigma'$ constructs a rectangle of
size $2^{l} \times 2^{k}$. The final shape $S_F$ constructed by $\sigma$ is obtained from the shape $S'$ constructed by $\sigma'$ by applying a horizontal doubling operation. Hence $\sigma$ constructs a rectangle of size $2^{l} \times 2^{k}$, as desired.

The case when the last operation of $\sigma$ is a vertical doubling is proved similarly, let assume that the last full doubling operation of $\sigma$ is a vertical doubling and consider that the constructor $\sigma' = (o_1, o_2, \ldots, o_{t-1})$ consisting of the first $t-1$ operations of $\sigma$.

By the induction hypothesis, constructor $\sigma'$ constructs a rectangle of
size $2^{l} \times 2^{k-1}$. The final shape $S_F$ constructed by constructor $\sigma$ is obtained from the shape $S'$ constructed by $\sigma'$ by applying a vertical doubling operation in this case, thus,  $\sigma$ constructs a rectangle of size $2^{l} \times 2^{k}$, as desired (see Fig.~\ref{fig:doubling-example}).
\qed 
\end{proof}

\begin{figure}[ht]
\centering 
\includegraphics[width=3in]{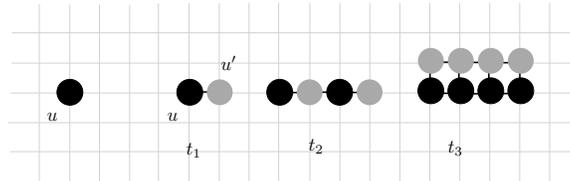}
\caption{Construction overview of a rectangle where $n = (2^2 \times 2 )$.}
\label{fig:doubling-example}
\end{figure}

\begin{remark} \label{remark1}
From Proposition~\ref{prop:sd-doubling-process}, it follows that the final shape of a full doubling constructor depends on the number of horizontal and vertical operations and not on the order of these operations.
\end{remark}

\subsection{An Arbitrary Connected Initial Shape $|S_I|\geq 1$}
\label{subsection:case_initial_more_than_one_node}
In this section we characterize shapes that can be obtained by a sequence of full doubling operations from an arbitrary connected initial shape $S_I$, where $|S_I| \geq 1$.

\begin{definition}[$w(C_j,u)$]\label{def:w(C_j,u)}
Let $w(C_j,u)$ denote the number of columns to the left (west) of a node $u$ in a column $j$, that is, $w(C_j,u)= u_x - C_l$, where $C_l$ is the leftmost column.
\end{definition}

\begin{definition}[$s(R_i,u)$]\label{def:s(R_i,u)}
Let $s(R_i,u)$ denote the number of rows below (south) of a node $u$ in a row $i$, that is, $s(R_i,u)= u_y - R_b$, where $R_b$ is the bottom-most row.
\end{definition}

\begin{definition}[Reconfiguration Function]\label{def:reconf-func}
Given two integers $l,k >0$, we define a reconfiguration function $F_{l,k}$ that maps a shape to another shape as follows:
 
 \begin{enumerate}
     \item First, the coordinates of $|S|$ points of $F_{l,k}(S)$ are determined as a function of the coordinates of the points of $S$. For each $u\in S$ the coordinates of $u^\prime\in F_{l,k}(S)$ are given by $( u_x + (2^l-1)w(C_j,u), u_y + (2^k-1)s(R_i,u) )$, as in Fig.~\ref{fig:reconf-func-scenario} (a and b).
     \item Generate the Cartesian product (Definition~\ref{def:cartesian-product})around $u^\prime$
     such that, $Rec(u',$ $2^l, 2^k) = \{u'_x+1,\ldots, u'_x+(2^l-1)\} \times \{u'_y+1,\ldots, u'_y+(2^k-1)\}$ originating at $u^\prime$. 
     Adding all points of these rectangles to $F_{l,k}(S)$ completes the definition of $F_{l,k}(S)$.
 \end{enumerate}

The output of the \emph{reconfiguration function} after these two phases is a shape $S$, such that $F_{l,k}(S)=\underset{u\in S} \bigcup Rec(u',2^l,2^k) $, as presented in Fig.~\ref{fig:final-shape} (note that $u'$ is a function of $u$ in the union).

\end{definition}

\begin{remark} \label{rem:reconf-func}
Another comprehensive definition of \emph{reconfiguration function} is that given two integers $l,k >0$, a \emph{reconfiguration function $F_{l,k}$} maps an input shape $S$ to the output shape $F_{l,k}(S)$ as follows:\\
For every node $u \in S$, let $u'$ be the point $( u_x + (2^l-1)w(C_j,u), u_y + (2^k-1)s(R_i,u))$. Then,
$$ F_{l,k}(S):=\underset{u\in S} \bigcup \Rec(u',2^l,2^k).
$$
\end{remark}

\begin{figure}[ht]
\centering
\parbox{4cm}{
\includegraphics[width=4cm]{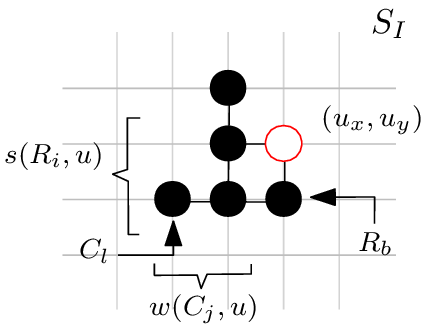}
\subcaption{ Input shape $S$.}}
\qquad
\parbox{6.8cm}{
\includegraphics[width=6.8cm]{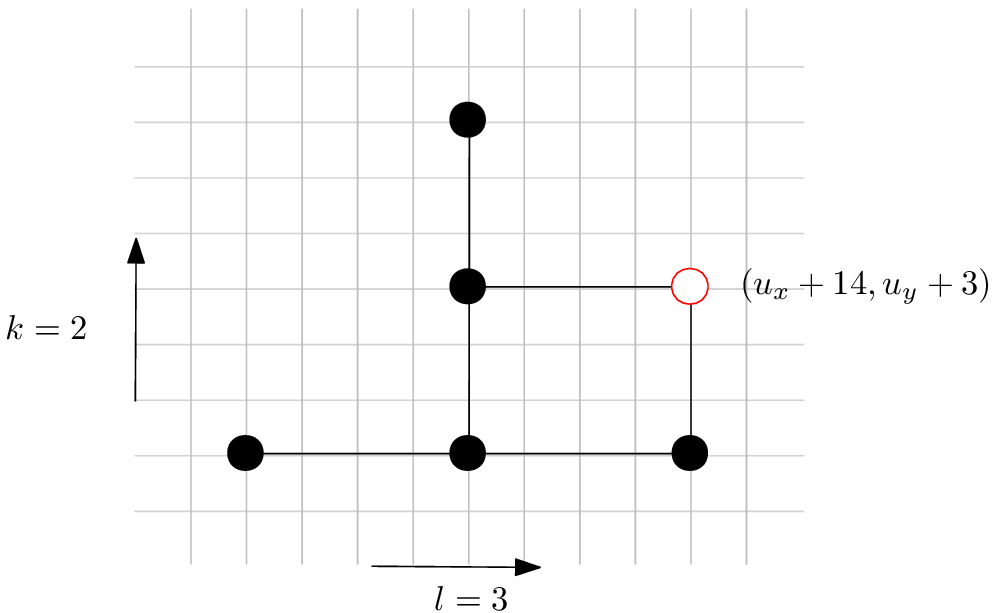}
\subcaption{After step 1 of \emph{reconfiguration function}.}}
\caption{Step 1 of a \emph{reconfiguration function} for $l=3$ and $k=2$. Node $u=(u_x,u_y)\in S$ (red node in (a)) will give a new node $u^\prime=(u^\prime_{x}+14,u^\prime_{y}+3)\in F_{3,2}(S)$ (red node in (b)) and similarly for the other nodes.}
\label{fig:reconf-func-scenario}
\end{figure}

\begin{figure}[ht]
\centering 
\includegraphics[width=4.5in]{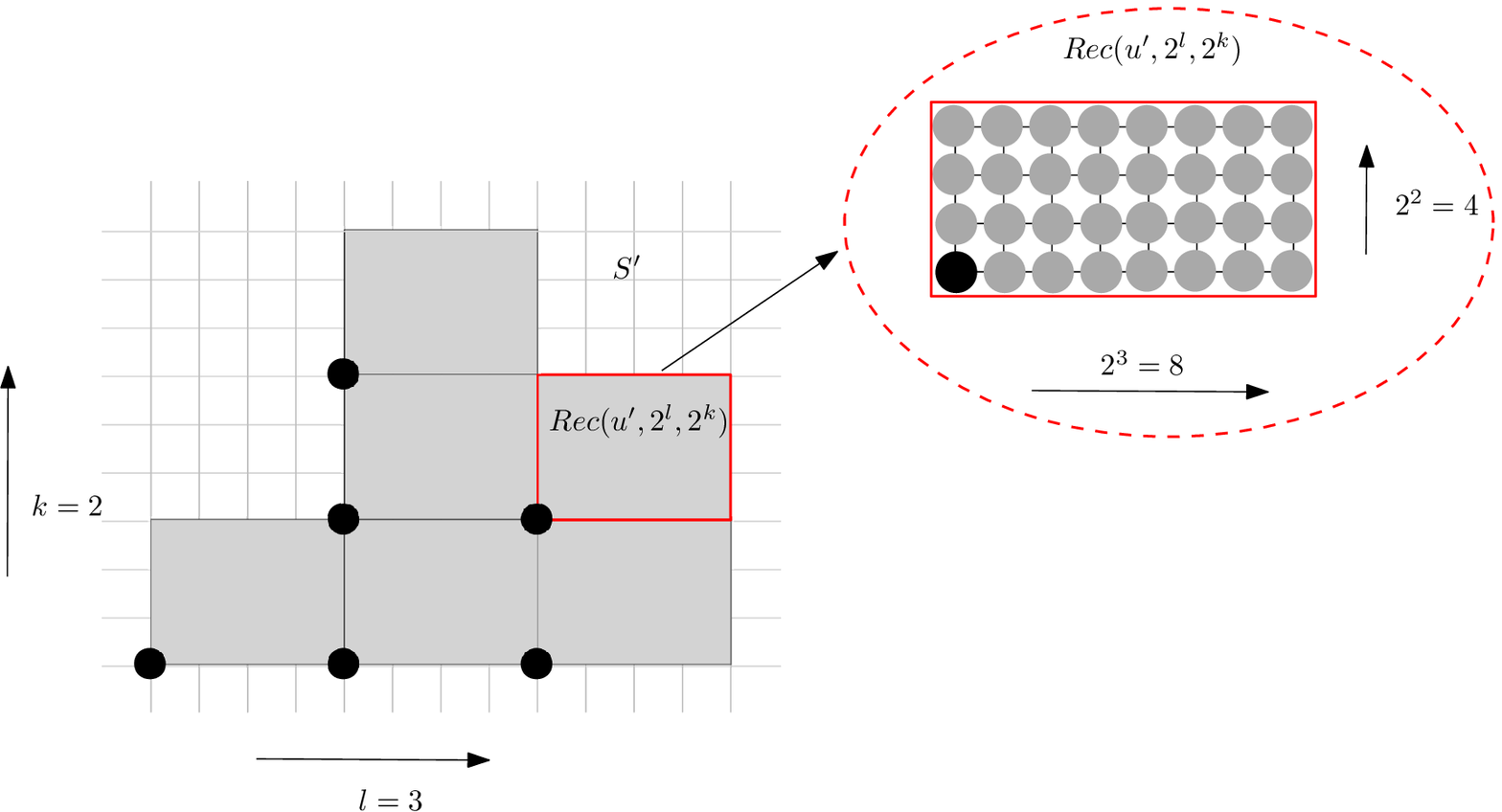}
\caption{An example of the output shape $S'$ after applying the \emph{reconfiguration function} $F_{l,k}(S)$.}
\label{fig:final-shape}
\end{figure}

\begin{lemma}[Additivity of Reconfiguration Function]\label{lemma:additivity}
For all shapes $S$ and all $l,k,l',k'\geq 0$ it holds that $F_{l',k'}(F_{l,k}(S))=F_{l'+l,k'+k}(S)$.
\end{lemma}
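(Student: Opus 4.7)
My plan is to reduce the lemma to a clean pointwise description of $F_{l,k}$ and then compose. I would begin by translating $S$ so that its leftmost column sits at $x=0$ and its bottommost row at $y=0$. This is without loss of generality because $w(C_j,u) = u_x - C_l$ and $s(R_i,u) = u_y - R_b$ are translation-invariant, so both sides of the claimed equality are equivariant under any translation of $S$. Under this normalization, the image of a node $u$ simplifies to $u' = (2^l u_x,\, 2^k u_y)$, and by Remark~\ref{rem:reconf-func},
\[
  F_{l,k}(S) \;=\; \bigcup_{u \in S} \Rec\bigl((2^l u_x,\, 2^k u_y),\, 2^l,\, 2^k\bigr).
\]

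Next, I would observe that each such rectangle covers the $x$-range $[2^l u_x,\, 2^l(u_x+1)-1]$ and $y$-range $[2^k u_y,\, 2^k(u_y+1)-1]$, so rectangles indexed by distinct nodes of $S$ are disjoint and their union has the clean pointwise characterization
\[
  (x,y) \in F_{l,k}(S) \;\iff\; \bigl(\lfloor x/2^l \rfloor,\, \lfloor y/2^k \rfloor\bigr) \in S,
\]
which I expect to be the core identity of the proof. Moreover, since the node of $S$ on the leftmost column (say, with $u_x=0$) contributes a rectangle based at $x=0$ and no other contribution has negative $x$-coordinate, $F_{l,k}(S)$ itself has leftmost column at $0$ and bottommost row at $0$. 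Thus the normalization is inherited, and the same pointwise description is valid at the outer application $F_{l',k'}$.

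Finally, I would chain the two pointwise descriptions via the standard floor identity $\lfloor \lfloor x/a \rfloor / b \rfloor = \lfloor x/(ab) \rfloor$ for positive integers $a, b$:
\begin{align*}
  (x,y) \in F_{l',k'}(F_{l,k}(S))
  &\iff \bigl(\lfloor x/2^{l'} \rfloor,\, \lfloor y/2^{k'} \rfloor\bigr) \in F_{l,k}(S)\\
  &\iff \bigl(\lfloor \lfloor x/2^{l'}\rfloor/2^{l} \rfloor,\, \lfloor \lfloor y/2^{k'}\rfloor/2^{k} \rfloor\bigr) \in S\\
  &\iff \bigl(\lfloor x/2^{l+l'} \rfloor,\, \lfloor y/2^{k+k'} \rfloor\bigr) \in S\\
  &\iff (x,y) \in F_{l+l',\, k+k'}(S).
\end{align*}

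The main obstacle is just bookkeeping: verifying that the normalization is harmless, that the rectangles are genuinely disjoint so the floor-based description is both necessary and sufficient, and that $F_{l,k}(S)$ inherits the normalization used at the outer application. The boundary cases $l=0$ or $k=0$, not explicitly addressed in Definition~\ref{def:reconf-func}, are painlessly absorbed because the same coordinate formula remains meaningful (a $1\times 2^k$ rectangle is simply a single column of height $2^k$, and $F_{0,0}$ reduces to the identity).
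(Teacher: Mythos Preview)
Your proof is correct and takes a genuinely different route from the paper's. The paper argues by direct coordinate tracking: for each $u\in S$ it computes the base point $u'$ under $F_{l,k}$, then $u''$ under $F_{l',k'}$, using along the way that $w(C_j,u')=2^l\,w(C_j,u)$ (and similarly for rows), and separately verifies that the top-right corner of the resulting rectangle matches the one produced by $F_{l+l',k+k'}$. Your argument instead normalizes so that $w(C_j,u)=u_x$ and $s(R_i,u)=u_y$, derives the clean membership criterion $(x,y)\in F_{l,k}(S)\iff(\lfloor x/2^l\rfloor,\lfloor y/2^k\rfloor)\in S$, checks that the normalization is inherited by $F_{l,k}(S)$, and then reduces the composition to the standard identity $\lfloor\lfloor x/a\rfloor/b\rfloor=\lfloor x/(ab)\rfloor$.

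The pointwise floor description buys you a shorter, more conceptual argument with essentially no corner-by-corner bookkeeping, and it makes the additivity an immediate consequence of a well-known arithmetic fact. The paper's approach, by contrast, stays closer to the raw definition and avoids the preliminary normalization step, at the cost of heavier symbolic manipulation. Your handling of the edge cases $l=0$ or $k=0$ (which Definition~\ref{def:reconf-func} formally excludes, though the lemma allows them) is also a small improvement in completeness.
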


\begin{proof}
By definition, $F_{l,k}(S)$ gives a new point $u^\prime=( u_x + (2^l-1)w(C_j,u), u_y + (2^k-1)s(R_i,u) )$ for each $u\in S$ and a rectangular set of points $R(u^{\prime})= \{u^{\prime}_x, u^{\prime}_{x}+1,\ldots, u^{\prime}_{x}+(2^l-1)\} \times \{u^{\prime}_y, u^{\prime}_{y}+1,\ldots, u^{\prime}_{y}+(2^k-1)\}$ originating at $u^\prime$. This gives a new shape $S'$. Applying $F_{l',k'}$ to each $u'$ gives a new point $u''$, such that
\begin{align*}
u''&=( u'_x + (2^{l'}-1)w(C_j,u'), u'_y + (2^{k'}-1)s(R_i,u') )\\
&=(u_x + (2^l-1)w(C_j,u) + (2^{l'}-1)w(C_j,u'), u_y + (2^k-1) s(R_i,u) +\\ &\;\;\;\;\;(2^{k'}-1)s(R_i,u'))\\
&=(u_x + (2^l-1)w(C_j,u) + (2^{l'}-1)2^{l}w(C_j,u), u_y + (2^k-1)s(R_i,u) +\\ &\;\;\;\;\;(2^{k'}-1)2^{k}s(R_i,u))\\
&=(u_x + (2^{l'+l}-1)w(C_j,u), u_y + (2^{k'+k}-1)s(R_i,u)).
\end{align*}
The set of all these points $u''$ is, thus, equivalent to the set of points returned by step (1) of $F_{l'+l,k'+k}(S)$.

For the remaining points, take again any $u\in S$. The rectangle of $u''$ returned by $F_{l'+l,k'+k}(S)$ consists of all points enclosed within the following four corners: $(u^{\prime\prime}_x,u^{\prime\prime}_y)$,
\begin{align}
&(u^{\prime\prime}_x+(2^{l'+l}-1),u^{\prime\prime}_y+(2^{k'+k}-1))=\nonumber\\
&(u_x+(2^{l'+l}-1)w(C_j,u)+(2^{l'+l}-1),u_y+(2^{k'+k}-1)s(R_i,u)+(2^{k'+k}-1))=\nonumber\\
&(u_x+(2^{l'+l}-1)(w(C_j,u)+1),u_y+(2^{k'+k}-1)(s(R_i,u)+1)),\label{eq:diagonal-point}
\end{align}
$(u^{\prime\prime}_x, u_y+(2^{k'+k}-1)(s(R_i,u)+1))$, and $(u_x+(2^{l'+l}-1)(w(C_j,u)+1),u^{\prime\prime}_y)$.

We have already shown that the point returned by $F_{l',k'}(F_{l,k}(S))$ for $u$ is the same to the one returned by $F_{l'+l,k'+k}(S)$, that is, $(u^{\prime\prime}_x,u^{\prime\prime}_y)$. It is sufficient to show that the opposite diagonal point returned by $F_{l',k'}(F_{l,k}(S))$ for $u$ is the same as the point of Eq. \ref{eq:diagonal-point}.

Applying $F_{l,k}(S)$ gives for $u\in S$ a new point $u'=(u_x + (2^l-1)w(C_j,u), u_y + (2^k-1)s(R_i,u))$ and the top-right corner of the rectangle of $u'$ is the point $v'=(u^{\prime}_x+2^l-1,u^{\prime}_y+2^k-1)=(u_x+(2^l-1)(w(C_j,u)+1),u_y+(2^k-1)(s(R_i,u)+1))$. Next, applying $F_{l',k'}$ to $S'$ gives for $v'$ a new point $v''$, whose $x$ coordinate is
\begin{align}
v^{\prime\prime}_x &= v^{\prime}_x+(2^{l'}-1)w(C_j,v')\nonumber\\
&= v^{\prime}_x+(2^{l'}-1)(2^l w(C_j,u)+(2^l-1))\label{eq:v''-x-coordinate}
\end{align}
and its $y$ coordinate is $v^{\prime\prime}_y = v^{\prime}_y+(2^{k'}-1)(2^k s(R_i,u)+(2^k-1))$. The top-right corner $w$ of the rectangle of $v^{\prime\prime}$ is obtained by adding $2^{l'}-1$ to both these coordinates, thus, yielding
\begin{align}
w_x &= v^{\prime}_x+2^l(2^{l'}-1)(w(C_j,u)+1)\nonumber\\
&= u^{\prime}_x+(2^l-1)+2^l(2^{l'}-1)(w(C_j,u)+1)\nonumber\\
&= u_x+(2^l-1)(w(C_j,u)+1)+2^l(2^{l'}-1)(w(C_j,u)+1)\nonumber\\
&= u_x+(2^{l'+l}-1)(w(C_j,u)+1)\label{eq:w-x-coordinate}
\end{align}
and, similarly, $w_y = u_y+(2^{k'+k}-1)(s(R_i,u)+1)$. Thus, $w$ is the same point as the one returned by Eq. \ref{eq:diagonal-point}. 

The lemma now follows by observing that the rectangle formed by $F_{l',k'}(F_{l,k}($ $S))$ within the area defined by the points $(u^{\prime\prime}_x,u^{\prime\prime}_y)$, $(u^{\prime\prime}_x,w_y)$, $(w_x,w_y)$, and $(w_x,u^{\prime\prime}_y)$ is missing no points.
\qed 
\end{proof}

\begin{theorem}\label{prop:generlize-local-rectangle}
Given any initial shape $S_I$ and any sequence of $l$ east and $k$ north full doubling operations, the obtained shape is $S_F = F_{l,k}(S_I)$.
\end{theorem}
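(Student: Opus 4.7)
The plan is to induct on the total number of operations $t = l+k$. For the base case $t=0$ the theorem is immediate: no operations are applied and $F_{0,0}(S_I) = \bigcup_{u \in S_I} \Rec(u,1,1) = S_I$, since each rectangle degenerates to a single point.

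For the inductive step, I fix a sequence $\sigma = (o_1, o_2, \ldots, o_t)$ of $l$ east and $k$ north full doubling operations and split on the type of the last operation $o_t$. Suppose first that $o_t$ is east; the north case will be symmetric. The prefix $(o_1, \ldots, o_{t-1})$ is a sequence of $l-1$ east and $k$ north operations, so by the inductive hypothesis it produces the intermediate shape $S'' = F_{l-1,k}(S_I)$. Let $S_F$ denote the shape obtained by applying $o_t$ to $S''$. The core of the argument is the following key claim: \emph{for any shape $S$, a single east full doubling transforms $S$ into $F_{1,0}(S)$} (and symmetrically, a single north full doubling yields $F_{0,1}(S)$). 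Granting the claim, Lemma~\ref{lemma:additivity} immediately delivers
\[
S_F = F_{1,0}(S'') = F_{1,0}(F_{l-1,k}(S_I)) = F_{l,k}(S_I),
\]
as desired.

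The key claim is established by a direct coordinate check. Take any node $u = (u_x, u_y)$ of $S$ sitting in column $C_j$ with $j = u_x$. Definition~\ref{def:full-doubling} contributes the two points $(2j-1, u_y)$ and $(2j, u_y)$ to the doubled shape, via $S'_{2j-1,\cdot} = \stackrel{j-1}\rightarrow S_{j,\cdot}$ and $S'_{2j,\cdot} = \stackrel{j}\rightarrow S_{j,\cdot}$. On the $F_{1,0}$ side, $u$ is sent to $u' = (u_x + (2^1-1)w(C_j,u),\, u_y)$, after which $\Rec(u', 2, 1) = \{(u'_x, u_y),\, (u'_x+1, u_y)\}$. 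The two sets coincide provided $w(C_j,u) = j-1$, which amounts to the leftmost column of $S$ having $x$-coordinate $1$. The main obstacle I anticipate is exactly this normalization bookkeeping: one must maintain inductively that every intermediate shape arising in $\sigma$ keeps its leftmost column at $x=1$ (and, symmetrically for the north case, its bottom-most row at $y=1$). This is straightforward because an east full doubling fixes the column at $x=1$ (it is translated by $j-1 = 0$) and a north full doubling leaves $x$-coordinates untouched. Finally, the image pairs $\{2j-1, 2j\}$ for distinct $j$ are disjoint, so taking the union over $u \in S$ in the claim calculation matches exactly $\bigcup_j (S'_{2j-1,\cdot} \cup S'_{2j,\cdot})$ from Definition~\ref{def:full-doubling} without overlap, closing the induction.
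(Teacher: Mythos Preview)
Your proposal is correct and follows essentially the same route as the paper: induct on the total number of operations, establish that a single east (resp.\ north) full doubling realizes $F_{1,0}$ (resp.\ $F_{0,1}$), and then invoke the additivity lemma (Lemma~\ref{lemma:additivity}) to close the step. The paper takes its base case at $t=1$ rather than $t=0$, but this is cosmetic; your explicit handling of the normalization issue (leftmost column at $x=1$, bottom-most row at $y=1$) is in fact more careful than the paper's own argument, which leaves that bookkeeping implicit.
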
 

\begin{proof}
We will prove by induction that applying any sequence of $l$ east and $k$ north full doubling operations on shape $S_I$ results in a shape $S_F$ which is equivalent to the output of the reconfiguration function $F_{l,k}(S_I)$.

For the base case, let first $l=1$ and $k=0$, that is, a horizontal full doubling operation applied to $S_I$. By the definition of this operation (see Definition~\ref{def:full-doubling}), the obtained shape is $S^\prime_I=\bigcup_{1\leq j\leq C} (\stackrel{j-1}\rightarrow S_{I\;j,\cdot} ~ \cup \stackrel{j}\rightarrow S_{I\;j,\cdot})$. So, given any $u\in S_I$, say in the $j$th column of $S_I$, $u$ is translated right $j-1$ times, thus yielding a new point $u^\prime=(u_x + j-1,u_y)=(u_x+w(C_j,u),u_y)$, as required by (1) of Definition \ref{def:reconf-func}. Moreover, $u$ is translated right $j$ times to give another new point $u''=(u_x + j,u_y)=(u'_x + 1,u_y)$, as required by (2) of Definition \ref{def:reconf-func} for sets $\{u'_x,u'_x + 1\}\times\{u_y\}$. Thus, $S^\prime_I=F_{1,0}(S_I)$ holds. A similar argument for $F_{0,1}(S_I)$ completes the proof of the base case.

For the induction hypothesis, let us assume that after $l$ east and $k$ north full doubling operations, starting from $S_I$, the obtained shape is $S=F_{l,k}(S_I)$. For the inductive step, we first consider the $(l+1)$th east operation is applied to $S$ to give a new shape $S^\prime$. By replacing in the base case $S_I$ by $S$, we get that $S'=F_{1,0}(S)=F_{1,0}(F_{l,k}(S_I))=F_{l+1,k}(S_I)$, where the second equality follows from the inductive hypothesis and the last equality from additivity of the reconfiguration function (Lemma \ref{lemma:additivity}).

For the second argument of $F_{l,k+1}(S_I)$, let us assume the $(k+1)$th north operation is applied to $S$ to give a new shape $S^\prime$. By replacing in the base case $S_I$ by $S$, we get that $S'=F_{0,1}(S)=F_{0,1}(F_{l,k}(S_I))=F_{l,k+1}(S_I)$, where the second equality follows from the inductive hypothesis and the last equality from additivity of the reconfiguration function (Lemma \ref{lemma:additivity}), and this ends the proof.
\qed
\end{proof}

\section{RC Doubling }\label{sec:RC-doubling}
After a formal definition of the \emph{RC doubling} operation, in this section, we study both the class characterization and the {\sc ShapeConstruction} problems. In particular, we develop a linear-time centralized algorithm to decide the feasibility of constructing $S_F$ from $S_I$ and to return a constructor of $S_F$ from $S_I$ if one exists, both within $O(\log n)$-time-steps.

\begin{definition}[RC Doubling]\label{def:RC-operation}
A row and column doubling is a growth operation where a direction $d \in \{$east, west$\}$ ($d \in \{$north, south$\}$) is fixed and all nodes of a subset of the columns (rows, respectively) of shape $S$ generates a new node in $d$ direction.

We define an \emph{RC doubling operation} for columns in the east direction and the other cases can be similarly defined. The operation is applied to a shape $S$ and will yield a new shape $S'$. Let $J$ be the set of indices (ordered from west to east) of all columns of $S$ and $D$ its subset of indices of the columns to be doubled by the operation. For any $j\in J$, let $w(D,j)=|\{j'\in D\;|\; j'<j\}|$, i.e. $w(D,j)$ denotes the number of doubled columns to the west of column $j$. Then the new shape $S'$ is defined as:
\begin{equation*}
S'=(\bigcup_{j\in J} \stackrel{w(C_j)}\rightarrow C_j)\cup(\bigcup_{j\in D} \stackrel{w(C_j)+1}\rightarrow C_j)   
\end{equation*}
That is, every doubled column $C_j$, for $j\in D$, generates a copy of itself to the east. The result is that every column $C_j$, for $j\in J$, is translated east by $w(C_j)$ and additionally the final position of the copy of $C_j$, for $j\in D$, is an east $(w(C_j)+1)$ translation of $C_j$.
\end{definition}

\begin{definition}[Single RC Doubling Operation]\label{def:single-column-doubling-operation}
Let $d\in J$ be the index of the single doubled column. Define $S_{\leq C_d}$ ($S_{\geq C_d}$) to be the set of columns to the west (east, resp.) of column $C_d$, inclusive. That is, $S_{\leq C_d}=\bigcup_{j\in J,j\leq d} C_j$ ($S_{\geq C_d}=\bigcup_{j\in J,j\geq d} C_j$, resp.). Then,
\begin{equation*}
S'=S_{\leq C_d}\cup(\stackrel{1}\rightarrow S_{\geq C_d}).  
\end{equation*}
\end{definition}

Fig.~\ref{fig:single-c-double} and~\ref{fig:single-c-double2} illustrate Definition~\ref{def:single-column-doubling-operation}. 

\begin{figure}[ht]
\centering
\includegraphics[width=3.5in]{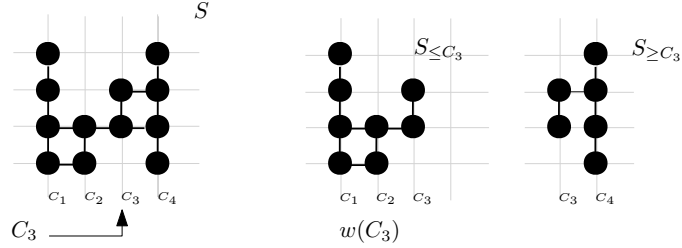}
\caption{An example of a single RC doubling operation (Definition~\ref{def:single-column-doubling-operation}), where $C_3$ is the selected column to be doubled in the \emph{east} direction.}
\label{fig:single-c-double}
\end{figure}

\begin{figure}[ht]
\centering 
\includegraphics[width=3.5in]{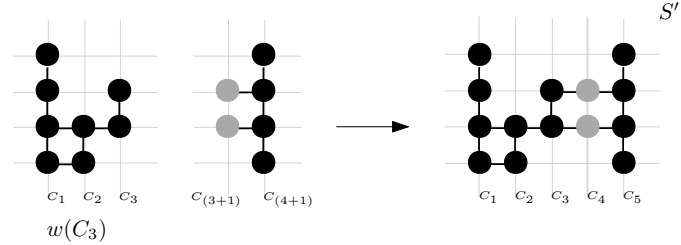}
\caption{The final shape $S'$ after a \emph{single RC doubling operation}. }
\label{fig:single-c-double2}
\end{figure}

\begin{proposition}[Serializability of Parallel Doubling]\label{pro:serializability}
A shape $S_F$ can be generated from a shape $S_I$ through a sequence of RC (parallel) doubling operations iff it can be generated through a sequence of single row/column doubling operations.
\end{proposition}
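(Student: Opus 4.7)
The proof requires establishing both directions of the equivalence. The direction from single doublings to parallel doublings is immediate: a single row/column doubling coincides with the parallel RC doubling of Definition~\ref{def:RC-operation} in the special case $|D|=1$, so any sequence of single operations is already a valid sequence of RC doublings producing the same final shape.

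For the reverse direction, the plan is to show that each parallel RC doubling in a given constructor can be replaced by a finite sequence of single doublings producing the same result shape. I will prove this by induction on $|D|$, the number of simultaneously doubled columns (or rows). The base case $|D|=1$ is the definition. For the inductive step, the key claim is that a parallel east doubling of $D = \{d_1 < d_2 < \cdots < d_m\}$ applied to a shape $S$ is equivalent to first performing a single doubling of the easternmost doubled column $C_{d_m}$ and then performing a parallel doubling with set $D\setminus\{d_m\}$ on the resulting shape.

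To justify the claim, I would track column positions directly using Definitions~\ref{def:RC-operation} and~\ref{def:single-column-doubling-operation}. After the single doubling of $C_{d_m}$, all columns at original indices $\leq d_m$ stay in place, a copy of $C_{d_m}$ appears at position $d_m+1$, and every column originally east of $d_m$ is shifted east by one. Because $d_1,\ldots,d_{m-1}$ all lie west of $d_m$, their current positions still equal their original ones, so the subsequent parallel doubling of $D\setminus\{d_m\}$ further translates every column originally east of some $d_i$ (with $i<m$) by exactly $m-1$. A direct computation then shows that each original column $C_j$ lands at position $j + w(D,j)$ and, for $j\in D$, its copy lands at position $j + w(D,j) + 1$, matching the result of the parallel doubling of $D$ given by Definition~\ref{def:RC-operation}. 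Iterating this decomposition replaces each parallel operation by exactly $|D|$ single doublings of the same direction, and the argument extends symmetrically to row doublings and to the west and south directions. Replacing every parallel step in the original constructor by the corresponding block of singles yields a single-doubling constructor from $S_I$ to $S_F$. The main technical obstacle is the positional bookkeeping across the decomposition, but choosing the east-to-west processing order ensures that every not-yet-doubled column of $D$ retains its original position until its turn, which keeps the verification routine.
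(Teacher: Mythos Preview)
Your argument is correct in outline and reaches the right conclusion, but one intermediate claim is misstated: you write that the parallel doubling of $D\setminus\{d_m\}$ ``further translates every column originally east of some $d_i$ (with $i<m$) by exactly $m-1$.'' That is only true for columns lying east of \emph{all} of $d_1,\ldots,d_{m-1}$; a column $C_j$ with, say, $d_1<j\le d_2$ is shifted by only $1$ in that step. The correct statement is that each column currently at position $j'$ is shifted by $w(D\setminus\{d_m\},j')$, and with that correction the final bookkeeping you describe (every original $C_j$ landing at $j+w(D,j)$, and each copy at $j+w(D,j)+1$) does go through. So the induction is sound once this line is fixed.

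The paper takes an essentially equivalent but slightly different route: rather than peeling off the easternmost doubled column and inducting on $|D|$, it serializes west-to-east and gives a one-line direct counting argument---each column $C_j$ is pushed east once for every $d\in D$ with $d<j$, giving a total shift of $w(D,j)$, which matches the parallel operation. The paper also remarks that any permutation of the single doublings would work. Your east-to-west induction is a perfectly valid alternative, and the order you chose is exactly what guarantees that the not-yet-processed indices of $D$ retain their original positions, which is what makes the inductive hypothesis apply cleanly; it simply trades a short global count for a structured inductive decomposition.
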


\begin{proof}
The ``if'' part follows trivially, because single row/column doubling is a special case of $RC$ (parallel) row/column doubling.

We now prove the ``only if'' part. That is, we will show that for any sequence $\sigma$ of $RC$ doubling operations generating $S_F$ from $S_I$, there is a sequence of single $RC$ operations $\sigma^\prime$ generating $S_F$ from $S_I$. For every $RC$ row/column operation $o$ in $\sigma$, we add to $\sigma^\prime$ a break down of $o$ into a sequence of individual operations of the columns (rows, resp.) in the subset. We do this for a west to east (bottom to top, resp.) ordering of the columns (rows, resp.), even though any permutation would do equally well.
We will prove the equivalence for east column operations as all other operations simply follow by rotating the whole system by 90\degree, 180\degree, and 270\degree. Let $D=\{d_1,d_2,\ldots,d_{|D|}\}$ be the set of column indices in an RC column operation applied to a shape $S$ and denote by $o_D$ the operation and by $o_D(S)$ the shape obtained by applying $o_D$ to $S$. Similarly, denote by $o_d(S)$ the shape obtained by applying a single doubling of column $C_d$, for $d\in D$, to shape $S$. Then it is sufficient to show that
\begin{equation*}
o_D(S)=o_{d_{|D|}}(o_{d_{|D|-1}}(o(\ldots o_{d_2}(o_{d_1}(S))\ldots))).  
\end{equation*}
It is sufficient to show that the right hand side of the above equation translates every column $C_j$ of $S$, $w(C_j)$ times (the copies of columns $C_d$, $d\in D$, will then be trivially translated by $w(C_j)+1$ as also done by the left hand side of the equation).

By definition, $o_d(S)=S_{\leq C_d}\cup(\stackrel{1}\rightarrow S_{\geq C_d})$ translates all columns to the east of $C_d$ by 1. Therefore, any given column $C_j$ of $S$, will be translated by 1 for each of the column operations of $o_{d_{|D|}}(o_{d_{|D|-1}}(o(\ldots o_{d_2}(o_{d_1}(S))\ldots)))$ that happen to its west. The number of those is equal to the number of indices in $D$ which are less than $j$, which is, by definition, equal to $w(C_j)$.
\qed 
\end{proof}

\begin{definition}[Consecutive Column/Row Multiplicities]\label{def:cons-columns}
Given a shape $S$ and a column $C_j$ (row $R_i$) of $S$ which is either the leftmost column (bottom-most row) (i.e., $j=1$) or $C_{j-1}\neq C_j$ ((i.e., $i=1$) or $R_{i-1}\neq R_i$) (where equality is defined up to horizontal (vertical) only translations of columns (rows)), the multiplicity $M_S(C_j)$ ($M_S(R_i)$) of column (row) $C_j$ ($R_i$), is defined as the maximal number of consecutive identical copies of $C_j$ ($R_i$) in $S$ to the right (top) of $C_j$ ($R_i$) (inclusive).
\end{definition}

\begin{definition}[Baseline Shape]\label{def:base-shape}
The \emph{baseline shape} $B(S)$ of a shape $S$, is the shape obtained as follows. For every column $C_j$ of $S$ with $M_S(C_j)>1$, remove all consecutive copies of $C_j$ to its right (non-inclusive) and compress the shape to the left to restore connectivity. Then for every row $R_i$ of $S$ with $M_S(R_i)>1$, remove all consecutive copies of $R_i$ to its top (non-inclusive) and compress the shape down to restore connectivity. Observe that all columns and rows of $B(S)$ have multiplicity 1. Moreover, any shape whose columns and rows all have multiplicity 1 is called a baseline shape. 
\end{definition}

\begin{obs}
Every shape $S$ that has a given shape $B$ as its baseline can be obtained by successively doubling the original columns and rows of $B$, thus, creating consecutive multiplicities of these columns and rows. Focusing on any given column $C$ (row $R$) of $B$, we denote by $M_S(C)$ ($M_S(R)$) the consecutive multiplicity of that particular column (row) of $B$ in $S$ (and not its total multiplicity, in case two or more identical copies of a column or row exist in non-consecutive coordinates of $B$). 
\end{obs}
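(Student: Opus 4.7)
The plan is to exhibit an explicit two-phase constructor from $B$ to $S$ that uses only single-column and single-row doublings; Proposition~\ref{pro:serializability} will then upgrade this to the general (parallel) RC setting. Throughout, I will keep using the multiplicity notation $m_j := M_S(C_j)$ for columns and $n_i := M_S(R_i)$ for rows of $B$, viewing $B$ as ``$S$ with all multiplicities flattened to $1$.''

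In Phase~1 (columns, west to east) I would process the columns of $B$ in order. For each column $C_j$ of $B$ I would apply $m_j-1$ single-column doublings at the current position of $C_j$. By Definition~\ref{def:single-column-doubling-operation}, each such doubling inserts one identical copy of the doubled column immediately to its east and translates everything further east by one unit. A short induction on $m_j-1$ then shows that after these doublings the evolving shape contains exactly $m_j$ consecutive identical copies of $C_j$ in the positions $S$ requires, while the still-unprocessed columns of $B$ sit untouched to the east.

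In Phase~2 (rows, south to north) I would call the intermediate shape $S'$ and repeat the mirror construction, applying $n_i-1$ single-row doublings at the current position of each row of $B$ in order. A symmetric induction yields the required row multiplicities. To conclude that the resulting shape is exactly $S$, I would invoke the fact that a shape is uniquely determined by its baseline together with its column and row multiplicity vectors — a direct consequence of Definition~\ref{def:base-shape} — together with a non-interference property showing that Phase~1 does not disturb row identities and Phase~2 does not disturb column identities.

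The main obstacle I expect is this non-interference property: I need to prove that a single-column doubling preserves row equality up to vertical translation (and symmetrically for rows). The argument is essentially a direct unwinding of Definition~\ref{def:single-column-doubling-operation}: when a column $C_d$ is doubled, every row of the shape that reaches across $C_d$ gains one extra node at the $x$-coordinate of the new copy, and every row to the west or east of $C_d$ is only rigidly translated. Hence two rows that agreed (as sets of $x$-coordinates) before the doubling still agree afterwards, and two rows that disagreed still disagree. Once this invariance and its row-analogue are stated as a small lemma, the rest of the argument reduces to inductive bookkeeping over the two multiplicity vectors.
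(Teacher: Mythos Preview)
The paper does not prove this statement at all: it is labelled an \emph{Observation} and is simply asserted, with the second sentence being purely notational. The only place the paper comes close to an argument is the ``sufficiency'' direction of Theorem~\ref{the:RC-characterization}, where it says one doubles each column of $B$ the required number of times, ``similarly for rows,'' and that ``it is not hard to see'' this yields $S_F$.

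Your two-phase constructor is exactly this implicit argument made explicit, and it is correct. The non-interference lemma you isolate (single-column doublings preserve row equality and inequality, and symmetrically) is the genuine content the paper suppresses; your unwinding of Definition~\ref{def:single-column-doubling-operation} is the right way to get it. Two small remarks: (i) the appeal to Proposition~\ref{pro:serializability} is unnecessary here, since the observation already speaks of ``successively doubling,'' i.e., single RC operations; (ii) your phrasing ``every row to the west or east of $C_d$'' is a slip---rows are horizontal---you mean the nodes of a given row lying strictly west (resp.\ east) of column $d$. Neither affects the validity of the plan.
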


\begin{theorem}\label{the:RC-characterization}
A shape $S_I$ can generate a shape $S_F$ through a sequence of RC doubling operations iff $B(S_I)=B(S_F)=B$ and for every column $C$ and row $R$ of $B$ it holds that $M_{S_F}(C)\geq M_{S_I}(C)$ and $M_{S_F}(R)\geq M_{S_I}(R)$.
\end{theorem}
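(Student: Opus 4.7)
My approach is to prove the two directions separately, using the Serializability Proposition (Proposition~\ref{pro:serializability}) to reduce any sequence of RC doublings to an equivalent sequence of \emph{single} row/column doublings. The key technical lemma I will establish is that a single column (row) doubling preserves the baseline of the shape and increases the multiplicity of exactly one baseline column (row) by one, leaving all other multiplicities unchanged. Both directions follow once this invariant is in place.

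For the forward direction ($\Rightarrow$), I would proceed by induction on the number of single doublings in the serialized sequence. At each step, doubling some column $C_d$ of the current shape $S_t$ inserts an identical copy of $C_d$ immediately east of $C_d$ (Definition~\ref{def:single-column-doubling-operation}). Consequently, the sequence of maximal consecutive identical-column blocks of $S_t$ is unchanged except that exactly one block grows by one, so the column-collapsed intermediate is unaffected; further, since the inserted copy has the same row-pattern as $C_d$, two consecutive rows of $S_t$ are identical iff the corresponding rows of $S_{t+1}$ are, so the row-collapse is unaffected as well. Therefore $B(S_{t+1}) = B(S_t)$ and exactly one column multiplicity increases by one; the analogous argument handles row doublings. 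Chaining through all time-steps yields $B(S_I) = B(S_F)$ and $M_{S_F}(X) \geq M_{S_I}(X)$ for every column/row $X$ of the common baseline.

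For the backward direction ($\Leftarrow$), assuming $B(S_I) = B(S_F) = B$ and $M_{S_F}(X) \geq M_{S_I}(X)$ for every $X$ of $B$, I would construct an explicit sequence of single doublings. Process the columns of $B$ in some order; for each column $C$ with $\Delta_C := M_{S_F}(C) - M_{S_I}(C) > 0$, apply $\Delta_C$ single column doublings to the current shape, each time doubling any column in the block corresponding to $C$ (which, by the invariant above, is present and all of whose entries are identical). After all columns have been processed, repeat the same for rows. By the invariant, the final shape $S$ satisfies $B(S) = B$ and $M_S(X) = M_{S_F}(X)$ for every $X$ of $B$.

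The hard part, and the only step that is not mechanical, is the concluding observation that $S = S_F$. For this I would prove a reconstruction lemma: any shape $T$ is uniquely determined (up to translation) by its baseline $B(T)$ together with the multiplicities $M_T(C_i)$ and $M_T(R_j)$ of the columns $C_i$ and rows $R_j$ of $B(T)$. The reason is structural: the column-block decomposition of $T$ assigns each $x$-coordinate to a unique baseline column $C_i$ (in consecutive blocks of sizes $M_T(C_i)$), and likewise each $y$-coordinate to a unique baseline row $R_j$; a cell $(x,y)$ lies in $T$ iff the baseline cell determined by the blocks of $x$ and $y$ lies in $B(T)$. Since $B(S) = B(S_F)$ and the multiplicities agree, $S = S_F$, completing the proof.
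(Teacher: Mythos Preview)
Your proposal is correct and your route is genuinely different from the paper's, especially in the necessity direction. You establish a clean invariant---each single column (row) doubling preserves the baseline and increments exactly one column (row) multiplicity---and carry it through by induction along the serialized sequence. The paper instead argues necessity by contradiction: assuming $B(S_I)\neq B(S_F)$ but $S_I\rightsquigarrow S_F$, it picks a pair of sequences (one from $B(S_I)$, one from $B(S_F)$) minimizing the time until the two evolving shapes coincide, then argues that the last operation of one sequence and some earlier matching operation of the other can both be deleted, contradicting minimality. Your invariant approach is more elementary, more transparent, and makes the multiplicity inequalities fall out immediately rather than via a separate ``cannot delete copies'' observation. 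On the sufficiency side, both proofs build $S_F$ from $S_I$ by repeatedly doubling within each block, but you make explicit the reconstruction lemma (that a shape is determined, up to translation, by its baseline together with the column and row multiplicities) which the paper leaves as ``not hard to see''; this lemma is exactly what closes the argument, and your justification via the block-index maps on $x$- and $y$-coordinates is the right one. One small point worth tightening: when you say the row-collapse is unaffected by a column doubling, note that the baseline is defined as row-collapse \emph{of the column-collapsed shape}, which you have already shown is unchanged, so the conclusion is immediate; your additional observation about consecutive rows of $S_t$ versus $S_{t+1}$ is really what preserves the row \emph{multiplicities}, not the baseline per se.
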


\begin{proof}
To prove that the condition is sufficient, we can w.l.o.g. restrict attention to single RC doubling operations (as these are special cases of RC doubling operations). Then, for every column $C$ of $B$ for which $M_{S_F}(C)> M_{S_I}(C)$ holds, we double the west-most copy of column $C$ in $S_I$, $M_{S_F}(C)-M_{S_I}(C)$ times to the east. Similarly, for rows. It is not hard to see that any sequence of these operations applied to $S_I$, yields $S_F$.

For the necessity of the condition, we need to show that if $S_I$ can generate $S_F$ through a sequence of RC doubling operations, then $B(S_I)=B(S_F)=B$ and the multiplicities are as described in the statement. We first observe that, by Proposition \ref{pro:serializability}, $S_I$ can also generate $S_F$ through a sequence of single RC doubling operations. So, it is sufficient to show that violation of any of the conditions would not allow for a valid sequence of single RC doubling operations.

Let us first assume that $B(S_I)=B(S_F)=B$ holds, but $M_{S_F}(C)\geq M_{S_I}(C)$ does not, that is, $M_{S_F}(C)<M_{S_I}(C)$ for some column $C$ of $B$. Then, there is no way of obtaining $S_F$ from $S_I$ as this would require deleting $M_{S_I}(C)-M_{S_F}(C)$ copies of $C$. Similarly, if $M_{S_F}(R)\geq M_{S_I}(R)$ is violated.

Finally, assume that $B(S_I)\neq B(S_F)$ and that $S_I\rightsquigarrow S_F$ still holds. By definition of baseline shapes, $B(S_I)\rightsquigarrow S_I$ and $B(S_F)\rightsquigarrow S_F$ hold, thus, we have $B(S_I)\rightsquigarrow S_I\rightsquigarrow S_F$ and $B(S_F)\rightsquigarrow S_F$. That is, there is a sequence of single column/row operations starting from $B(S_I)$ and another starting from $B(S_F)$ that eventually make the two shapes equal (starting originally from two unequal baseline shapes). So, there must be a pair $\sigma$ and $\sigma^\prime$ of such sequences minimizing the maximum length $\max_{\sigma,\sigma^\prime}(|\sigma|,|\sigma^\prime|)$ until the two shapes first become equal. Call $S_t$ and $S^\prime_{t^\prime}$ the dynamically updated shapes by $\sigma_t$ and $\sigma^\prime_{t^\prime}$, respectively. In what follows we omit the time-step subscripts. Let us assume w.l.o.g. that it is the last step $t_{min}$ of $\sigma$ that first satisfies $S=S^\prime$ and that this step is a doubling of a column $C$. Thus, after step $t_{min}$, both $S$ and $S^\prime$ contain an equal number of at least two consecutive copies of $C$. But the only way a shape can first obtain two consecutive copies of a column is by doubling one of its columns, thus, there must be a previous single column doubling operation in $\sigma^\prime$ that doubled column $C$ (note that, at that point, $C$ could have been a subset of the final version of the column). Deleting that operation from $\sigma^\prime$ and the last operation at $t_{min}$ from $\sigma$, yields a new pair of sequences that satisfy $S=S^\prime$ at some $t\leq t_{min}-1$, thus, contradicting minimality of the $(\sigma, \sigma^\prime)$ pair. We must, therefore, conclude that $S_I\rightsquigarrow S_F$ cannot hold in this case, (see Fig. \ref{fig:reachability}).
\qed
\end{proof}

\begin{figure}[ht]
\centering 
\includegraphics[width=3.5in]{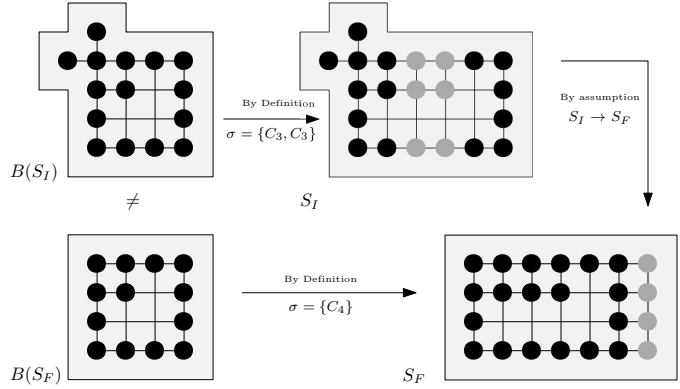}
\caption{Theorem 2 illustration example.}
\label{fig:reachability}
\end{figure}

\begin{proposition}\label{prop:one-Baseline}
For any shape $S$, there is a unique baseline shape $B(S)$. 
\end{proposition}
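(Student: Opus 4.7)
The plan is to verify that each of the two phases in the construction of $B(S)$ --- the column reduction followed by the row reduction --- produces an output that is uniquely determined by its input, from which uniqueness of $B(S)$ follows by composition. The subtle point is that Definition~\ref{def:base-shape} is stated as an iterative procedure over columns $C_j$ with $M_S(C_j)>1$ (and, afterwards, over rows), so I need to show that the outcome does not depend on the order in which these columns (or rows) are processed.

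For the column phase, I would first observe that the relation ``$j$ and $j+1$ index identical columns of $S$ (viewed as $y$-coordinate sets up to horizontal translation)'' extends to an equivalence relation on the column indices of $S$ whose classes are precisely the maximal contiguous runs of equal columns. This partition is intrinsic to $S$. Crucially, the multiplicities $M_S(C_j)$ appearing in Definition~\ref{def:base-shape} are computed with respect to the fixed original shape $S$ and are not dynamically updated as columns are removed; consequently the net effect of the iterative procedure is, for each equivalence class, to delete every column except its leftmost representative. This batch description depends only on $S$, is manifestly order-independent, and is followed by the compression step, which simply reassigns $x$-coordinates to the surviving columns in their left-to-right order --- another deterministic relabelling. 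Hence the intermediate shape $S^{\flat}$ produced at the end of the column phase is uniquely determined by $S$.

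The row phase is handled by an entirely symmetric argument applied to $S^{\flat}$: the maximal contiguous runs of identical rows of $S^{\flat}$ partition the row indices canonically, only the bottom-most row of each run survives, and a deterministic downward compression reassigns $y$-coordinates. Since $S^{\flat}$ is uniquely determined by $S$, so is the final output $B(S)$.

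The main obstacle is precisely the reconciliation of the procedural wording in Definition~\ref{def:base-shape} with the order-independent batch view just described; once that is carefully established via the observation that multiplicities refer to the original shape (rather than to the shape evolving under the removals), together with the fact that compression is a canonical relabelling preserving relative order, uniqueness of $B(S)$ follows immediately.
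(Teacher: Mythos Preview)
Your argument is correct and is a genuinely different route from the paper's. The paper argues by contradiction: it assumes two distinct baselines $B_1(S)\neq B_2(S)$, notes that by Definition~\ref{def:base-shape} every column and row of each has multiplicity~$1$, and then asserts that this forces $B_1(S)=B_2(S)$. That step is left essentially unjustified; implicitly one must see that ``all multiplicities equal to~$1$'' together with both shapes arising from $S$ pins down exactly which columns and rows survive, but the paper does not spell this out. You instead give a direct, procedural argument: you reinterpret the column phase as the order-independent batch operation ``retain the leftmost representative of each maximal run of identical columns of $S$, then relabel'', do the same for the row phase on the resulting intermediate shape, and conclude uniqueness by composition. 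Your version makes explicit precisely the point the paper's proof glosses over---namely that the runs (and hence the survivors) are determined by $S$ alone, because the multiplicities $M_S(C_j)$ in Definition~\ref{def:base-shape} refer to the original shape---and it also clarifies that the compression step is a canonical relabelling. What the paper's approach buys is brevity; what yours buys is a self-contained justification that the iterative wording of Definition~\ref{def:base-shape} really describes a well-defined map.
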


\begin{proof}
Let us assume a shape $S$ where there are multiple baseline shapes, consider two of those shapes, $B_1(S)$ and $B_2(S)$ where $B_1(S) \neq B_2(S)$, by Definition ~\ref{def:base-shape}, all columns and rows of these baseline shapes have multiplicity 1. Thus, these baseline shapes $B_1(S)$and $B_2(S)$ are equal since they have the same multiplicity of the same column and the same row. Therefore, this is a contradiction, and  there is only one baseline shape $B(S)$.
\qed 
\end{proof}

\begin{lemma}\label{lemma:algo-running-time}
For any $S_I$, $S_F$ satisfying the conditions of Theorem \ref{the:RC-characterization}, there is a constructor from $S_I$ to $S_F$ using at most $2\log n$ time-steps, where $n$ is the total number of nodes in $S_F$.
\end{lemma}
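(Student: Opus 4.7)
The plan is to leverage Theorem~\ref{the:RC-characterization}, which guarantees $B(S_I) = B(S_F) = B$ and that for every column $C$ and row $R$ of $B$ only the multiplicities need to be raised (never lowered) from $M_{S_I}$ to $M_{S_F}$. I will build the constructor in two independent phases: first east column doublings, then north row doublings. The key observation that makes this separation work is that an east (or west) column doubling adds one node to every row and so preserves the multiplicities of all rows of $B$, and symmetrically a north (or south) row doubling preserves the column multiplicities. Hence the two phases can be bounded separately, and it will suffice to show that each phase takes at most $\log n$ time-steps.

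In the column phase, at each time-step $t$ I will maintain the current multiplicity $a^{(t)}_C$ of every column $C$ of $B$ and target multiplicity $b_C = M_{S_F}(C)$. For each $C$ with $a^{(t)}_C < b_C$ I will pick exactly $\min(a^{(t)}_C,\, b_C - a^{(t)}_C)$ of its $a^{(t)}_C$ currently consecutive copies and include them in the subset $D^{(t)}$ of a single parallel RC east doubling. By Definition~\ref{def:RC-operation}, doubling any $k$ of the $a$ consecutive copies of a column $C$ simply extends the consecutive block to $a+k$ copies, so $a^{(t+1)}_C = \min(2a^{(t)}_C,\, b_C)$; each multiplicity therefore either doubles or reaches its target at every step. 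Since $a^{(0)}_C \ge 1$ and $b_C \le n$, each column reaches its target in at most $\lceil \log_2 b_C \rceil \le \log n$ steps, and taking the maximum over all columns gives $T_C \le \log n$. The row phase is fully symmetric, yielding $T_R \le \log n$, so the total constructor length is at most $2\log n$ time-steps.

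The main thing I expect to need to argue carefully is that the parallelism is sound: selecting the union $D^{(t)}$ of per-column subsets and firing a single RC doubling really does raise every chosen column's multiplicity by the intended amount independently. This follows directly from the translation semantics of Definition~\ref{def:RC-operation} --- each original column is shifted east by the number of doubled columns strictly to its west, and each doubled column places its copy immediately to its own east --- so the copies generated inside the consecutive block of one baseline column cannot leak into or disturb the block of a neighbouring baseline column. Once this non-interference is observed, the per-column doubling-or-target behaviour and the logarithmic bound are immediate, so I do not expect any further technical obstacle beyond this verification.
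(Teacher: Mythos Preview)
Your proposal is correct and follows essentially the same approach as the paper: two phases (columns then rows), with each phase doubling all deficient multiplicities in parallel until each reaches its target, bounded by $\log n$ steps per phase since every multiplicity is at most $n$. You are somewhat more explicit than the paper about why the two phases are independent (column doublings preserve row multiplicities) and about non-interference of the parallel per-column selections, but the underlying argument is the same.
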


\begin{proof}
Since there is a constructor from $S_I$ to $S_F$, then, by Theorem \ref{the:RC-characterization}, $B(S_I)=B(S_F)=B$ and for every column $C$ and row $R$ of $B$ it holds that $M_{S_F}(C)\geq M_{S_I}(C)$ and $M_{S_F}(R)\geq M_{S_I}(R)$. By Definition \ref{def:reachability} (in Section \ref{sec:Models and Preliminaries}) $S_I\rightsquigarrow S_F$, $S_F$ can be obtained by applying on every column $C$ and row $R$ of $S_I$ as many \emph{RC doubling operations} as required to make its multiplicity equal to $S_F$. W.l.o.g. we only show this process applied to columns.

Let $C$ be a column of $B$. Starting from $M_{S_I}(C)$ copies of $C$ in $S_I$ we want to construct the $M_{S_F}(C)$ copies of $C$ in $S_F$. Note that neither $M_{S_F}(C)$ nor $M_{S_F}(C)-M_{S_I}(C)$ are necessarily powers of $2$. Then, let $2^k$ be the greatest power of $2$, such that $M_{S_I}(C)2^k < M_{S_F}(C)$, i.e., $M_{S_I}(C)2^k < M_{S_F}(C) < M_{S_I}(C)2^{k+1}$. 
Then, from the second inequality, it holds that 
$M_{S_F}(C) - M_{S_I}(C)2^k $ $<M_{S_I}(C)2^{k+1}$ $- M_{S_I}(C)2^k$ 
and this leads to 
$ M_{S_F}(C)- M_{S_I}(C)2^k < M_{S_I}(C)2^k$,
which means that if we construct $M_{S_I}(C)2^k$ columns then columns remaining to be constructed to reach $M_{S_F}(C)$ will be less than the constructed ones.

So, we construct $M_{S_I}(C)2^k$ columns (including the original column) by always doubling, within $ k \leq \log (M_{S_F}(C))$ steps. Once we have those, we double in one additional time-step $M_{S_F}(C) - M_{S_I}(C)2^k$ of those to get a total of $M_{S_F}(C)$ columns within $k + 1 \leq \log (M_{S_F}(C))$ steps. If we set $M_{S_F}(C)$ to be the maximum multiplicity of $S_F$, then for every column $C'\neq C$, its multiplicity $M_{S_F}(C') \leq M_{S_F}(C)$ can be constructed in parallel to the multiplicity of $C$, thus, within these $\log (M_{S_F}(C))$ steps. And similarly for rows. As $M_{S_F}(C) \leq n$ and $M_{S_F}(R)\leq n$, where $n$ is the number of nodes of $S_F$, it holds that all column and row multiplicities can be constructed within at most $2\log n$ time-steps.
\qed
\end{proof}

\begin{obs}
To construct any shape $S$ of size $n$, it requires at least $\lceil \log n \rceil $ time-steps.
\end{obs}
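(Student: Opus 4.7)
The plan is to exploit the fact that every growth operation considered in the paper can at most double the shape size in one time-step. By inspection of Definitions~\ref{def:full-doubling} and~\ref{def:RC-operation}, together with the doubling operation, each existing node generates at most one new node per time-step in each of the three operation types: \emph{full doubling} gives exactly $|S_{t+1}| = 2|S_t|$; \emph{RC doubling} adds only a copy of the chosen subset of rows/columns, which is bounded in size by $|S_t|$; and the most general \emph{doubling} operation produces at most one offspring per node. Thus we obtain the uniform invariant $|S_{t+1}| \leq 2|S_t|$ for all $t \geq 0$.

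Given this invariant, first I would establish it once and for all by listing the three cases above. Then a trivial induction on $t$ starting from the singleton $|S_0| = 1$ (the standard regime for shape construction from scratch in this paper) yields $|S_t| \leq 2^t$ for every $t \geq 0$. To reach a final shape $S_F$ with $|S_F| = n$, we therefore need $2^{t_{\text{last}}} \geq n$, i.e., $t_{\text{last}} \geq \log_2 n$. Since $t_{\text{last}}$ is a nonnegative integer, this forces $t_{\text{last}} \geq \lceil \log n \rceil$, as claimed.

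There is no real obstacle here; the only care needed is to make the implicit assumption on $|S_I| = 1$ explicit. (If the initial shape has size $m > 1$, the same argument yields the slightly stronger lower bound $\lceil \log(n/m) \rceil$, but the observation as stated is the form used in the paper.) I would keep the write-up to a few lines, since the content is essentially the doubling-bound induction combined with integrality of the time-step count.
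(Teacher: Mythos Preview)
The paper states this as an \emph{Observation} with no accompanying proof; it is left as immediate. Your argument---each operation at most doubles the size, hence $|S_t|\le 2^t$ from a singleton, hence $t\ge\lceil\log n\rceil$---is exactly the intended one-line justification, and your remark about the implicit $|S_I|=1$ assumption is well taken.
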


We now present an informal description of a linear-time algorithm for {\sc ShapeConstruction}. The algorithm decides whether a shape $S_F$ can be constructed from a shape $S_I$ and, if the answer is positive, it returns an $O(\log n)$-time-step constructor.\\

After that, Algorithm~\ref{alg:baseline-shape},~\ref{alg:decision} and~\ref{alg:trans} shows the pseudo code that briefly formulates this procedure. Given a pair of shapes $(S_I,S_F)$, do the following: \noindent Given a pair of shapes $(S_I,S_F)$, do the following:
\begin{mylist}{Step}
\item Determine the \emph{baseline shapes} $B(S_I)$ and $B(S_F)$ of $S_I$ and $S_F$, respectively. Then compare $B(S_I)$ with $B(S_F)$ and, if they are equal, proceed to Step 2, otherwise return \emph{No} and terminate. 

\item Since we have $B = B(S_I) = B(S_F)$, if for all columns $C$ (rows $R$) of $B$ it holds that $M_{S_I}(C) \leq M_{S_F}(C)$ and $M_{S_I}(R)\leq M_{S_F}(R)$ then proceed to Step 3, else return \emph{No} and terminate.
\item Output the constructor defined by Lemma \ref{lemma:algo-running-time}.
\end{mylist}

The next Algorithm~\ref{alg:baseline-shape} does not depend on
the order of removing consecutive columns/rows multiplicities or whether compress columns first or rows.
\SetAlgoNoLine
\begin{algorithm}[ht]
\caption{Baseline Shape B(S) }
\label{alg:baseline-shape}
\SetKwInOut{KwIn}{Input}
    \SetKwInOut{KwOut}{Output}
    \SetKwFunction{Function}{Baseline-Shape}
    \SetKwProg{Fn}{Function}{:}{}
    \KwIn{$S$.}
    \KwOut{$B(S)$. }
    \tcc{Baseline function to check the multiplicity of every column and row of a shape $S$ and return the $B(S)$ after remove duplication.}
    \SetKwProg{Pn}{Function}{:}{\KwRet}
    \Pn{\Function{$S$}}{
         \For{every column ($C_j$ of $S$)}
        {
        
        \If {the multiplicity $M_S(C_j)$ of column $C_j$ is more than one}
           { ($C_j \leftarrow C_{j+1}$)}
            
        }
        \For{every row ($R_i$ of $S$)}
        {
        
        \If {the multiplicity $M_S(R_i)$ of row $R_i$ is more than one}
           { ($R_i \leftarrow R_{i+1}$)}

        }
        \KwRet $B(S)$
    }
\end{algorithm}

The following Algorithm~\ref{alg:decision} determines the feasibility of growing $S_I$ to $S_F$, and then we use that decision to compute the constructor $\sigma$ in Algorithm~\ref{alg:trans}.
\begin{algorithm}[ht]
\caption{Decision}
\label{alg:decision}
\SetKwInOut{KwIn}{Input}
    \SetKwInOut{KwOut}{Output}
    \KwIn{($S_I, S_F$).}
    \KwOut{True if $S_F$ can be obtained from $S_I$ via a sequence of $RC$ doubling operations; False otherwise. }
       Compute the baseline of $S_I$, \Function{$S_I$}\\
       Compute the baseline of $S_F$, \Function{$S_F$}\\
        \If{$B(S_I)$ is equal to $B(S_F)$}
             {\KwRet True}
         \Else
            {\KwRet False}
            
\end{algorithm}
\begin{algorithm}[ht]
\caption{Constructor $S_I\stackrel{\sigma}\rightsquigarrow S_F$ }
 \label{alg:trans}
    \SetKwInOut{KwIn}{Input}
    \SetKwInOut{KwOut}{Output}
    \KwIn{Decision, from Algorithm~\ref{alg:decision}.}
    \KwOut{Constructor $\sigma$.}
    
         \While {Decision}
         {\tcc{Find the constructor $\sigma$ for columns.}
         \For { every column ($C_j$ of $S_F$)}
            {\For { every column ($C_j$ of $S_I$)}{
            \If{the index of every column $C_j$ of $S_F$ and $S_I$ are equal }
            {
            \If {multiplicity of $M_{S_F}(C_j)$ are greater than $ M_{S_I}(C_j)$}
                {
                
                compute the difference in $m(C_j)$ variable\\
                
                append ($m(C_j)$ to $\sigma$)
                }
                }
            }
            
            }
           
            }
                compute the maximum of  $\sigma $ in $max_{value}$ variable.\\
                \tcc{Count the steps $k$ for doubling columns.}
                \If{$k= \log$ $max_{value}$ $\bmod 2$ does not equal $0$}
                {
                add one extra step to $k$ variable to double the reminder of columns that are not power of 2.
                }
                \KwRet{$\sigma=\{m(C_{1}),m(C_{2}),\ldots,m(C_{j})\}$.}
\end{algorithm}

Finally, together Proposition~\ref{pro:serializability}, Theorem~\ref{the:RC-characterization},  and Lemma~\ref{lemma:algo-running-time} imply that:

\begin{theorem}\label{the:poly-algo}
Algorithm~\ref{alg:trans} is a linear-time algorithm for {\sc ShapeConstruction} under RC doubling operations. In particular, given any pair of shapes $(S_I,S_F)$, when $(S_I \rightsquigarrow S_F)$ the algorithm returns a constructor ${\sigma}$ of $S_F$ from $S_I$ of $O(\log n)$-time-steps.
\end{theorem}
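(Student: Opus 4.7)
The plan is to derive Theorem~\ref{the:poly-algo} as a direct synthesis of Proposition~\ref{pro:serializability}, Theorem~\ref{the:RC-characterization}, and Lemma~\ref{lemma:algo-running-time}, leaving only the linear-time implementability of Algorithms~\ref{alg:baseline-shape}--\ref{alg:trans} to be verified in detail. I would organise the proof in three blocks: correctness of the decision, correctness and length of the produced constructor, and the running-time analysis of the procedure itself.

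For correctness of the decision step, I would invoke Theorem~\ref{the:RC-characterization}, which states that $S_I \rightsquigarrow S_F$ holds if and only if $B(S_I)=B(S_F)$ and $M_{S_F}(C)\ge M_{S_I}(C)$, $M_{S_F}(R)\ge M_{S_I}(R)$ for every column $C$ and row $R$ of the common baseline. It is therefore enough to argue that Algorithm~\ref{alg:baseline-shape} correctly computes the unique baseline of its input (uniqueness being guaranteed by Proposition~\ref{prop:one-Baseline}) by collapsing each maximal run of consecutive identical columns to a single representative, and analogously for rows, and that Algorithm~\ref{alg:decision} together with the multiplicity comparison performed inside Algorithm~\ref{alg:trans} checks both conditions of the characterisation. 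Whenever one of them fails, the procedure returns \emph{No}; otherwise it proceeds to build the constructor.

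For correctness and length of the output constructor, I would apply Lemma~\ref{lemma:algo-running-time}: once the reachability conditions hold, there is an explicit $2\log n$-time-step schedule in which each original column (row) $C$ of the common baseline is first doubled $k\le \log M_{S_F}(C)$ times, bringing its multiplicity to $M_{S_I}(C)\cdot 2^k$, and then corrected by one final doubling of $M_{S_F}(C)-M_{S_I}(C)\cdot 2^k$ of its copies. By Proposition~\ref{pro:serializability}, all column and row doublings required by different baseline elements at the same round can be merged into a single parallel RC doubling without changing the reachable shape, so the schedule really uses $O(\log n)$ parallel time-steps. I would then verify that Algorithm~\ref{alg:trans} produces exactly this schedule: it computes the per-column (and per-row) multiplicity-difference vector $\{m(C_j)\}$ of Lemma~\ref{lemma:algo-running-time} and emits the $k$ doubling rounds plus, when the maximum difference is not a power of two, the single corrective round.

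For the linear-time bound, the main obstacle is to argue that, despite the doubly nested loops appearing in the pseudocode, the whole procedure can be implemented in $O(n)$ total work with $n=|S_F|$. The argument I would give is amortised. Baseline extraction proceeds as one left-to-right pass over the columns of $S$ comparing $C_j$ only to the previously preserved column; each node is involved in $O(1)$ column comparisons and, symmetrically, $O(1)$ row comparisons, so Algorithm~\ref{alg:baseline-shape} runs in $O(|S|)$ time. The baseline equality test $B(S_I)=B(S_F)$ is then a single linear scan, and the multiplicity vectors, computed as a by-product of the baseline pass, are compared entry by entry in $O(|B|)\le O(n)$. The output constructor has total description size $O(|B|+\log n)=O(n)$, because at each of the $O(\log n)$ rounds the set of doubled columns or rows is encoded succinctly by indices, and the sum over all rounds of the number of doubling events does not exceed the number of newly created columns and rows, which is at most $n$. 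The nested loops in Algorithm~\ref{alg:trans} should therefore be read as a conceptual double indexing executed by a single synchronous scan of the two shapes once their baselines are known to agree. Making this amortised reading of the pseudocode precise, rather than taking the loops at face value, is the crux of the linear-time claim and is where I would concentrate the technical effort; once it is in place, the theorem follows immediately from the three results cited at the start.
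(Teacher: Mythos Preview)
Your proposal is correct and follows essentially the same approach as the paper: correctness and the $O(\log n)$ time-step bound are inherited from Proposition~\ref{pro:serializability}, Theorem~\ref{the:RC-characterization}, and Lemma~\ref{lemma:algo-running-time}, and the linear running time is established by an amortised scan argument in which each node participates in $O(1)$ column and row comparisons during baseline extraction and multiplicity computation. If anything, your three-block organisation and your explicit remark that the nested loops of Algorithm~\ref{alg:trans} must be read as a single synchronous pass are cleaner than the paper's exposition, which focuses almost exclusively on the running-time analysis and leaves correctness implicit in the preceding lemmas.
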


\begin{proof}
In order to prove the running time of the constructor $S_I\stackrel{\sigma}\rightsquigarrow S_F$ is linear, we will analyse the pseudo code of Algorithm~\ref{alg:baseline-shape},~\ref{alg:decision} and~\ref{alg:trans}. First, in Algorithm~\ref{alg:baseline-shape}, we determine the baseline shape $B(S)$ by comparing every column $C_j$ of shape $S$ from left to right, that is, $C_j$ to $C_{j+1}$, $C_{j+2}$ until we find the first one $C_{j+x}$ which is not equal to $C_j$. Then, we start from $C_{j+x}$ and do the same for $C_{j+x+1}$, $C_{j+x+2}$ until again we find the first one which is not equal to $C_{j+x}$ and so on. Thus, every column is involved to at most one comparison to a column to its right, that is, the number of comparisons is $C_{|J|-1}$, the same argument holds for rows.

In order to determine the multiplicities of columns, we actually compare two columns $C_j$ and $C_{j+x}$ point-to-point until either the points are exhausted or the first pair of unequal points is found, which means that every point $(u_x,u_y) \in C_j$ is involved in at most one comparison to another point. Therefore, the total number of comparisons equals $|S|$, and the total time-steps is linear.

For Algorithm~\ref{alg:decision}, we compare $B(S_I)$ and $B(S_F)$, by translating them into the same origin and assigning the bottom-most point of their leftmost column to $(0,0)$ and accordingly updating the coordinates of all other points. This procedure yields a decision which is based on Algorithm~\ref{alg:baseline-shape}, which is also linear. 
Following that Algorithm~\ref{alg:trans}, there is a variable $m(C_j)$ for each column $C_j$, that keeps track of the number $C_j$ copies, and if the comparison of $C_j$ and $C_{j+x}$ is true, we increase this variable. The total number of increments for $m(C_j)$ is at most equals the number of columns $C_j$, thus, $C_J$. Then, at each point of these duplicated columns $m(C_j)$, we subtract from it at most twice (once for columns and once for rows), for a total of two operations per point. Therefore, the overall running time for returning a sequence $\sigma$ for the growth process is equal at most the number of nodes of $S$.
\qed 
\end{proof}

\section{Doubling }\label{sec:doubling}
This section studies \emph{doubling} operations in their most general form, where up to individual nodes can be involved in a growth operation. We start with a formal definition of two sub-types of general doubling operations and then investigate both the class characterization and {\sc ShapeConstruction} problems. By focusing on the special case of a singleton $S_I$, we give a universal linear-time-step (i.e., slow) constructor and, on the negative side, prove that some shapes cannot be constructed in sub-linear time-steps. Our main results are then two universal constructors that are efficient (i.e., polylogarithmic time-steps) for large classes of shapes. Both constructors can be computed by polynomial-time centralized algorithms for any input $S_F$.

Given a shape $S$ and two neighboring nodes $u, v\in S$, let $S(u)$ and $S(v)$ be the maximal connected sub-shapes of $S$ containing $u$ but not $v$ and $v$ but not $u$, respectively. When $u$ is doubling in the direction of $v$, call that direction $d$, rigidity of connections (see Definition~\ref{def:rigid-connection}) implies that any $w\in S(u)\setminus S(v)$ must remain in its position while any $z\in S(v)\setminus S(u)$ must translate by 1 in direction $d$. 
For any node in $S(u)\cap S(v)$ these two actions would contradict each other. Such nodes belong to a $u,v,\ldots,u$ cycle, and any such cycle must break or grow  (if more than one nodes are allowed to double)  in at least one of its connections, in addition to the connection $uv$ which will by assumption grow. In this paper, we focus on the case where all these cycles break (or grow) at the $(C_j,C_{j+1})$ cut. Depending on how we choose to treat such cycles, we shall define two sub-types of general \emph{doubling operations}: \emph{rigidity-preserving doubling} and \emph{rigidity-breaking doubling}. Intuitively, in the former for all affected edges $e$ in the $(C_j,C_{j+1})$ cut a node is generated over $e$, while in the latter any subset of those edges can simply break.

We start with a special case of the rigidity-breaking doubling operation in which, in every time-step, a single node doubles. This special case is particularly convenient for the class characterization problem, as it can provide a (slower but simple) way to simulate both types of doubling operations. It also serves as an easier starting point towards the definition of the more general operations.

\begin{definition}[Single-Node Doubling]\label{def:partial-doubling}
A \emph{single-node doubling operation} is a growth operation in which at any given time-step $t$, a direction $d  \in \{north, east,$ $south, west$\} is fixed and a single node $u$ of shape $S$ doubles in direction $d$.

Consider w.l.o.g. an $east$ doubling operation applied on $u=(u_x,u_y) \in C_j$ of $S$. If $u$ has no east neighbor in $S$, then, $u$ generates a new node $u'=(u_x+1,u_y) \in C_{j+1}$ and the obtained shape is $S\cup \{u'\}$. Otherwise, $u$ has a neighbor $v \in C_{j+1}$ of $S$ which will need to translate by 1 in the \emph{east} direction together with some sub-shape of $S(v)$. 

We identify the maximal connected sub-shape $S^{\prime}(u) \subseteq S(u)$ that contains no node from columns $C_m$, for all $m \geq j + 1$, and the maximal connected sub-shape $S^{\prime}(v) \subseteq S(v) \setminus S^{\prime}(u)$. That is, $S^{\prime}(u)$ contains all nodes on $u$’s side that must stay put, while, from the remaining nodes, $S^{\prime}(v)$ contains all nodes that must translate by 1. Any bicolor edge (one whose one endpoint is in $S^{\prime}(v)$ and the other endpoint in $S^{\prime}(u)$; we call these the \emph{bicolor edges associated with $uv$}) must be an edge of the $(C_j,C_{j+1})$ cut. We remove all bicolor edges in order to perform the operation.
\end{definition}

\begin{definition}[Rigidity-Preserving Doubling Operation]\label{def:rigid-partial-doubling}
A rigidity-pre\-serving doubling operation is a generalization of a single-node doubling operation. In every time-step a direction $d$ is fixed and, for any node $u$ that doubles towards a neighbor $v$ in direction $d$ and for all bicolor edges $e$ associated with $uv$, a node is generated over $e$. (see Fig.~\ref{fig:rigid-v}).
\end{definition}

\begin{figure}[ht]
\centering 
\includegraphics[width=4in]{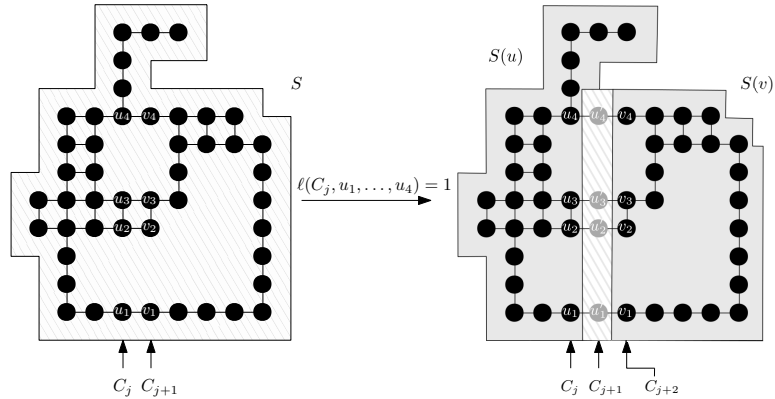}
\caption{An illustration of Definition~\ref{def:rigid-partial-doubling}, in which all nodes of $(C_j)$ must double and the sub-shape $S(v)$ must be shifted to the east by one.}
\label{fig:rigid-v}
\end{figure}

\begin{definition}[Rigidity-Breaking Doubling Operation]\label{def:non-rigid-partial-doubling}
A rigidity-brea\-king doubling operation is a generalization of a single-node doubling operation. In every time-step a direction $d$ is fixed and, for any node $u$ that doubles towards a neighbor $v$ in direction $d$ and for all bicolor edges $e$ associated with $uv$, either a node is generated over $e$ or $e$ is removed (see Fig.~\ref{fig:non-rigid-v} and \ref{fig:break-grow-cycle}).
\end{definition}

\begin{figure}[ht]
\centering 
\includegraphics[width=4in]{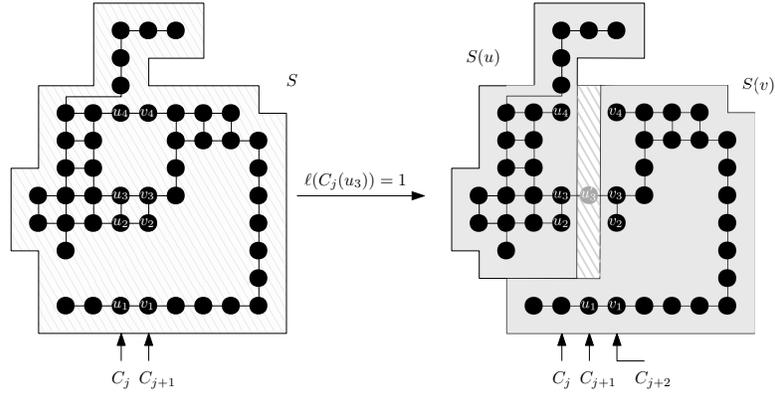}
\caption{An illustration of Definition~\ref{def:non-rigid-partial-doubling}, where there is one node $u_3 \in C_j$ doubles to the east and shifts the connected component in the same direction, while other edges in $C_j$ are removed.}
\label{fig:non-rigid-v}
\end{figure}

\begin{figure}[ht]
\centering 
\includegraphics[width=1.4in]{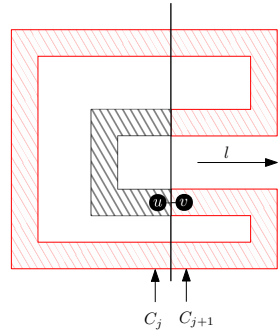}
\caption{An example of a rigidity-breaking doubling operation highlighting also the sets $S'(u)$ and $S'(v)$ and a bicolor edge in the $(C_j,C_{j+1})$ cut (the one above $uv$).}
\label{fig:break-grow-cycle}
\end{figure}

\begin{obs}
Observe that general doubling operations allow for the translation of components which can extend to both sides of the doubling node(s) (for example, the red component in Fig. \ref{fig:break-grow-cycle}). It is easy to see that such a translation of a $S'(v)$ can never be blocked by a component $S'(u)$ which stays put (e.g., the black component in Fig. \ref{fig:break-grow-cycle}). Because if a node in $S'(v)$ had a neighbor in the direction of translation belonging to the component $S'(u)$, then this would violate maximality of $S'(u)$.
\end{obs}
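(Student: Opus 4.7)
The plan is to prove the non-blocking claim by contradiction, showing that any putative collision would violate the maximality clause in the definition of $S'(u)$. Without loss of generality, I fix the doubling direction as east; the remaining three cases follow by rotating the configuration by $90\degree$, $180\degree$, or $270\degree$. Assume for contradiction that the one-unit east translation of $S'(v)$ is blocked by $S'(u)$: there exist $z\in S'(v)$ and $w\in S'(u)$ with $w=(z_x+1, z_y)$, so that $w$ occupies the position into which $z$ would translate.

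The first key step is to place $z$ in $S(u)$. Since $w\in S'(u)\subseteq S(u)$, there is a path from $u$ to $w$ inside $S\setminus\{v\}$. Prepending the edge $zw$ yields a walk from $u$ to $z$ in $S\setminus\{v\}$, provided $z\neq v$; the degenerate case $z=v$ is ruled out immediately, because then $w$ would lie in column $C_{j+1}$, contradicting the column restriction on $S'(u)$. Hence $z\in S(u)$.

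The second step is to verify the column bound for $z$ itself. Because every node of $S'(u)$ lies in some column $C_m$ with $m\leq j$, we have $w_x\leq j$, and consequently $z_x=w_x-1\leq j-1$, so $z$ obeys the same column restriction used to define $S'(u)$.

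Combining the two observations, the set $S'(u)\cup\{z\}$ is connected (through the edge $zw$ together with the connectedness of $S'(u)$), is a sub-shape of $S(u)$, and still contains no node from any column $C_m$ with $m\geq j+1$; yet it strictly contains $S'(u)$. This contradicts the maximality of $S'(u)$, completing the argument. The main subtlety I expect is the careful bookkeeping around the degenerate case $z=v$ and ensuring that the walk used to place $z$ in $S(u)$ genuinely avoids $v$; both are dispatched by the same column inequality that drives the whole contradiction, so once the definitions are unpacked the observation really does follow in a single line, as the authors remark.
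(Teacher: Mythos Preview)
Your proposal is correct and follows exactly the route the paper indicates: the observation itself contains its entire justification (``Because if a node in $S'(v)$ had a neighbor in the direction of translation belonging to the component $S'(u)$, then this would violate maximality of $S'(u)$''), and you have merely unpacked that sentence into a careful contradiction argument. One harmless slip: in the degenerate case $z=v$, since $v\in C_{j+1}$ the node $w=(z_x+1,z_y)$ would lie in $C_{j+2}$, not $C_{j+1}$; either way the column restriction on $S'(u)$ is violated, so your conclusion stands.
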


\begin{proposition}\label{prop:linear-construction}
For any shapes $S_I$ and $S_F$, where $S_I\subseteq S_F$, there is a linear-time-step constructor of $S_F$ from $S_I$.
\end{proposition}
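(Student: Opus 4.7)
The plan is to construct $S_F$ from $S_I$ by adding the nodes of $S_F \setminus S_I$ one at a time, using a single-node doubling operation per time-step. Since single-node doubling is a special case of both rigidity-preserving and rigidity-breaking doubling (by taking the doubling set to consist of exactly one node), such a constructor is valid under the general doubling model. Let $m = |S_F| - |S_I|$; the constructor will consist of exactly $m$ time-steps, so its length is linear in $|S_F|$.

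The first step is to specify an ordering $u_1, u_2, \ldots, u_m$ of the nodes of $S_F \setminus S_I$ such that, for each $i$, the node $u_i$ is adjacent in $S_F$ to some node of $S_i := S_I \cup \{u_1, \ldots, u_{i-1}\}$. Such an ordering exists because $S_F$ is connected and $S_I$ is a non-empty connected subset: at every stage $i \leq m$, the non-empty set $S_F \setminus S_i$ must contain a node adjacent to $S_i$, otherwise $S_F$ would decompose into two disjoint non-empty connected components. One concrete way to obtain such an ordering is a breadth-first search within $S_F$ initialized at the boundary of $S_I$.

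The second step is to argue that adding $u_i$ to $S_i$ corresponds to a legal single-node doubling. Let $v \in S_i$ be a neighbor of $u_i$, and let $d$ be the direction from $v$ to $u_i$. By construction, the position $u_i$ does not belong to $S_i$, so $v$ has no neighbor in direction $d$ within the current shape $S_i$. By Definition~\ref{def:partial-doubling}, performing a single-node doubling of $v$ in direction $d$ therefore simply produces $S_i \cup \{u_i\} = S_{i+1}$ with no bicolor edges to handle and no translation of any other node. Connectivity is preserved because $u_i$ is attached to $v \in S_i$.

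Iterating for $i = 1, \ldots, m$ transforms $S_I = S_1$ into $S_F = S_{m+1}$ in exactly $m \leq |S_F|$ time-steps, yielding the claimed linear-time-step constructor. There is no real obstacle here beyond verifying that every intermediate shape $S_i$ is a subset of $S_F$ (hence the target position $u_{i+1}$ is always empty at the moment we act) and that each single-node doubling operation degenerates trivially when no neighbor exists in the growth direction; both follow directly from the ordering and Definition~\ref{def:partial-doubling}.
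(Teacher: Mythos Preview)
Your proof is correct and follows essentially the same idea as the paper: grow $S_F$ from $S_I$ by repeatedly doubling a node into an adjacent empty position of $S_F$, using a BFS-type ordering to guarantee such a position always exists, so that no pushing or edge-breaking ever occurs. The only minor difference is that the paper constructs an explicit spanning forest of $S_F\setminus S_I$ rooted at neighbors in $S_I$ and grows whole BFS levels in parallel (spending at most four time-steps per level, one per direction), whereas you add one node per time-step; both yield a linear bound and the underlying argument is the same.
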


\begin{proof}
Consider a pair of shapes $(S_I,S_F)$, to obtain the final shape $S_F$ starting with a single-node initial shape $S_I$, compute a spanning tree as shown in Fig.~\ref{fig:tree-shape}. Then, select any node as its root, this selected root will correspond to the initial single node of $S_I$.
Since a node can generate at most one new node per time-step. Thus, each phase should consist of at most 4 time-steps $t$. Then, in every phase generate the next level of a \emph{breadth first search} of the spanning tree $T$.
Due to the fact that the next \emph{breadth first search} level always concerns positions that are empty in the shape and are adjacent to a node that has already been generated. Therefore, it is clear that it is possible to fill all these positions without the need to push any existing node.

For more general $S_I \subseteq S_F$, we already have the nodes in $S_I$ and the nodes in $S_F \setminus S_I$ must be generated in order to construct $S_F$. We compute a spanning forest $T$ of $S_F \setminus S_I$, each (maximal) connected component of which is a tree $T_i$ spanning that component. For every such tree $T_i$ we pick a neighbor $u_i$ of $T_i$ in $S_I$, which is guaranteed to exist by maximality of the components of $T$. Then from every $u_i$ we, set $u_i$ as the root of $T_i\cup \{u_i\}$ and run the process we already have for singleton $S_I$, in order to construct $T_i$.
\qed 
\end{proof}

\begin{figure}[ht]
\centering 
\includegraphics[width=1.4in]{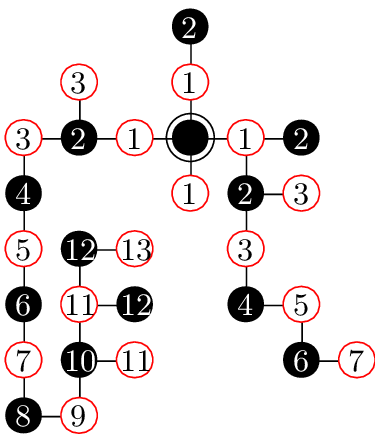}
\caption{Constructing any shape $S_F$ in linear time-steps using BFS.}
\label{fig:tree-shape}
\end{figure}

We call any $L\geq 1$ consecutive nodes connected horizontally or vertically an \emph{$L$-line}.

\begin{proposition}\label{prop:3-line-prop}
If a $3$-line is ever generated, it must be preserved in the final shape $S_F$, that is, rigidity-preserving doubling operation will never break the $3$-line.
\end{proposition}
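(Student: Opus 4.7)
The plan is to induct on the number of rigidity-preserving doubling operations applied after the $3$-line is generated. The driving observation is structural: a rigidity-preserving doubling never deletes an edge. If both endpoints of an edge lie on the same side of every cut introduced by the operation, the edge remains a unit edge with its two endpoints translated uniformly; otherwise the edge is bicolor and, by Definition~\ref{def:rigid-partial-doubling}, a new node is generated over it, so the edge becomes a length-$2$ collinear path in the same orientation. Thus any pair of collinearly adjacent nodes $p, q$ of the current shape remains connected in the next shape by a unit-edge path of length $1$ or $2$ lying in the same row (for east/west operations) or column (for north/south operations) as $pq$.

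For the inductive step, fix a horizontal $3$-line $a, b, c$ at $(x, y), (x+1, y), (x+2, y)$; the vertical case is symmetric. If the operation direction is north or south, every cut lies between two consecutive rows, so both horizontal edges $ab$ and $bc$ have endpoints in row $y$ and are therefore non-bicolor; all three nodes translate (or stay) uniformly and the $3$-line is transported as is. If the direction is east or west, every cut lies between two consecutive columns, and a cut turns an edge of the $3$-line bicolor precisely when it separates its two endpoints. A single cut can make at most one of $ab, bc$ bicolor, while a multi-node parallel operation could simultaneously introduce cuts between $(x, x{+}1)$ and $(x{+}1, x{+}2)$, making both $ab$ and $bc$ bicolor. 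By the key observation, each such bicolor edge is filled in by a newly generated node in row $y$, and the result in that row is a contiguous unit-edge line of length $3$, $4$, or $5$ containing $a, b, c$ as a collinear subsequence connected by unit horizontal edges.

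Iterating the inductive step through the remainder of the process up to $S_F$ yields that $a, b, c$ always lie on a common row (or column), and that this row (or column) always contains a contiguous unit-edge line of length at least $3$ through them, which is exactly the sense in which the $3$-line is preserved and the rigidity-preserving operation never breaks it. I do not anticipate a deep obstacle: the only point requiring care is the multi-node parallel case, where several cuts could simultaneously affect the $3$-line, but rigidity-preservation guarantees that every bicolor edge is replaced (rather than removed) by an intermediate node, so collinear connectivity in the $3$-line's row (or column) is maintained at every step.
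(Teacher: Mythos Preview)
Your proof is correct and takes a genuinely different route from the paper's. The paper argues by contradiction and picture-driven case analysis: for a horizontal $3$-line it separately rules out a north break (arguing that the whole line belongs to one component and must be pushed as a unit) and an east break (arguing that a gap would force a bicolor attachment that rigidity-preservation must fill). You instead isolate a single structural invariant---a rigidity-preserving step never removes an edge: non-bicolor edges are carried rigidly, and bicolor edges are subdivided by a new node in the same orientation---and then run a clean induction on the number of steps, using only the fact that bicolor edges for a step in direction $d$ lie in the column cut (for $d\in\{\text{east},\text{west}\}$) or the row cut (for $d\in\{\text{north},\text{south}\}$), exactly as asserted in Definition~\ref{def:partial-doubling}. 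This buys you a more uniform argument that also transparently handles the parallel case (several simultaneous cuts through the line), whereas the paper's argument is phrased around a single doubling node and leans on the figures. The trade-off is that your proof takes the ``bicolor edges lie in the $(C_j,C_{j+1})$ cut'' claim from the definitions at face value; the paper's proof, while sketchier, is closer to re-deriving that geometric fact in context. One small clarification worth adding to your write-up: when you conclude ``all three nodes translate (or stay) uniformly'' in the north/south case, spell out the transitivity step (non-bicolor $ab$ forces $a,b$ on the same side, non-bicolor $bc$ forces $b,c$ on the same side, hence all three agree), since ``non-bicolor'' by itself only speaks about one edge at a time.
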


\begin{proof}
Let us consider a horizontal w.l.o.g. $3$-line, 
since there are two possible directions of rigidity-preserving doubling operation that can be applied to the $3$-line, either  \emph{north} or  \emph{east}, we will first assume a \emph{north} rigidity-preserving doubling operation that breaks the $3$-line into two distinct rows. As illustrated in Fig.~\ref{fig:3-line} (a), there must be a row below the $3$-line to perform this operation and break it.
Since all nodes of the $3$-line belong to the same component and such an assumption of applying this operation and break the $3$-line cannot hold because it contradicts the definition of rigidity-preserving doubling operation (see Definition~\ref{def:rigid-partial-doubling} where all nodes below the 3-line component must push the complete $3$-line to the north and hence never break).

Now, let us assume the other case of applying an \emph{east} rigidity-preserving doubling operation to break the $3$-line, where a break means that there will be an unoccupied position between two consecutive nodes of $3$-line, which means a column $C_j$ splits the line into two parts.(i.e., it must be going through a node of the $3$-line).
Applying this operation by one of the nodes of $3$-line, this will only expand the $3$-line to the \emph{east} and never break it. So, there must be a row below or above the $3$-line. Let us consider it is below w.l.o.g, this implies that the node of $C_j$ who generate to the east and pushed right and the node of the $3$-line breaking to the right must belong to the same component defined in the area to the right of the column, as in Fig~\ref{fig:3-line} (b). Therefore, this contradict with the fact that rigidity-preserving doubling operation when pushing a component must generate nodes in all attachment points, and this ends the proof.
\qed 
\end{proof}

\begin{figure}[ht]
\centering 
\parbox{2.5cm}{
\includegraphics[width=2.5cm]{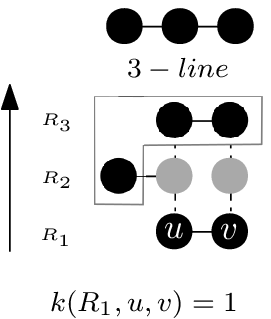}
\caption*{(a)}}
\qquad
\qquad
\qquad
\parbox{2.5cm}{
\includegraphics[width=2.5cm]{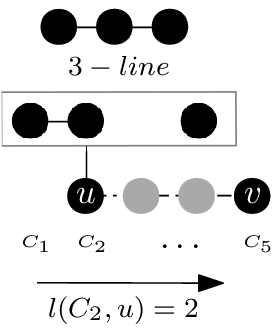}
\caption*{(b)}}
\caption{Maintaining the $3$-line shape in the final construction. }
\label{fig:3-line}
\end{figure}

\begin{definition}[Staircase]\label{def:staircase}
A staircase is a shape $S$, in which each step consists of at least 3 consecutive nodes.
\end{definition}

A staircase is a shape $S$, in which each step consists of at least 3 consecutive nodes, whereas an exact-staircase consists of two nodes.

\begin{proposition}
A staircase of size $n$ requires $\Omega(n)$ time-steps to be generated by rigidity-preserving doubling operations.
\end{proposition}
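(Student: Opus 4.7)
The staircase $S_F$ of size $n$ contains $k = \Theta(n)$ horizontal 3-lines, one per step, at pairwise distinct $y$-coordinates. By Proposition 5, every horizontal 3-line ever generated during a rigidity-preserving doubling process must persist in $S_F$, so starting from any initial shape with $O(1)$ horizontal 3-lines (in particular, a singleton has none), the construction must create $\Theta(n)$ new horizontal 3-lines over the course of the process. The plan is therefore to prove that any single rigidity-preserving doubling time-step adds at most a constant number of new horizontal 3-lines to the current shape; this yields the desired $\Omega(n)$ lower bound immediately by summing over all time-steps.

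I would first handle time-steps of direction east (the other three directions are symmetric, and an analogous argument handles horizontal 3-lines created by vertical-direction time-steps). A new horizontal 3-line at row $y_i$ is created only if, prior to the step, at most two of its three positions were occupied, and the doubling contributes the missing occupancy. In a rigidity-preserving east doubling involving column $C_j$, newly generated nodes appear in column $C_{j+1}$ at every bicolor edge across the $(C_j,C_{j+1})$ cut, so the pattern of new $C_{j+1}$-nodes mirrors the pattern of the doubling sub-component $S'(u)$ in $C_j$. Since $S_F$ is sparse, containing at most two nodes in any column (only at the corner columns shared by two consecutive steps), any doubling column of an intermediate shape can neighbor at most $O(1)$ rows of $S_F$ whose 3-line is completed in this step without producing extraneous nodes inconsistent with $S_F$.

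The main obstacle is making this consistency argument fully rigorous under all configurations of parallel doublings and translated sub-components. I would carefully enumerate the bicolor edges generated in a given time-step, distinguish between horizontal 3-lines required by $S_F$ and incidental ones produced in intermediate shapes, and argue that simultaneously completing many step 3-lines would force the presence of nodes outside $S_F$, violating the invariant that the shape can still grow into $S_F$ by doubling operations alone. Summing the resulting $O(1)$ bound over $T$ time-steps then forces $T = \Omega(n)$, completing the proof.
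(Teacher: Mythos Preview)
Your counting approach via Proposition~5 is different from the paper's argument, which instead argues directly that in any valid construction only the two current endpoints of the partial staircase can grow, so at most two nodes are added per time-step. Your idea of tracking the number of horizontal $3$-lines is natural, but the crux --- that each time-step creates at most $O(1)$ new horizontal $3$-lines --- is left as a sketch, and the sketch does not go through as written.

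Two concrete problems. First, you invoke ``nodes outside $S_F$'' as an invariant violation, but there is no such invariant: doubling translates sub-shapes, so intermediate shapes need not be contained in $S_F$ positionally. Proposition~5 gives you only that a $3$-line, once created, is never broken; it may still be translated arbitrarily before landing in $S_F$. Hence your sparsity argument about columns of $S_F$ says nothing directly about how many rows of an \emph{intermediate} shape can receive a new $3$-line in one step. Second, and more seriously, a single \emph{north} doubling of a row that already contains a horizontal $3$-line duplicates that $3$-line into two consecutive rows; applied to many rows in parallel, one vertical step can double the total count of horizontal $3$-lines. So the bare claim ``$O(1)$ new $3$-lines per step'' is false without further constraints. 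To rescue the approach you would need to argue that any such mass-creation step produces a configuration of $3$-lines whose relative column offsets can never be made to match the staircase (since horizontal translations act by column and hence move all of them identically), and then rule out the corresponding steps in every valid construction. That argument is the real content and is missing. The paper sidesteps this by bounding nodes rather than $3$-lines.
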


\begin{proof}
Starting from a shape $|S| = 1$ and performing a rigidity-preserving doubling operation at the first time-step $t_1$, continuing the same operation at the next time-step, we observe that at $t_3$ we have two possible situations, either building a square of $4$ nodes or generating nodes at the two endpoints. The initial attempt of constructing a square will fail because adding any node will result in a $3$-line (see Proposition~\ref{prop:3-line-prop}).

Consequently, we proceed with the second where we reach a point of generating nodes at the two endpoints of the constructed staircase, as shown in Fig.~\ref{fig:linear-staircase}.
As a result, growth can only be at most $2$ nodes it each time-step $t$, which is linear in total.
\qed 
\end{proof}

\begin{figure}[ht]
\centering 
\includegraphics[width=3in]{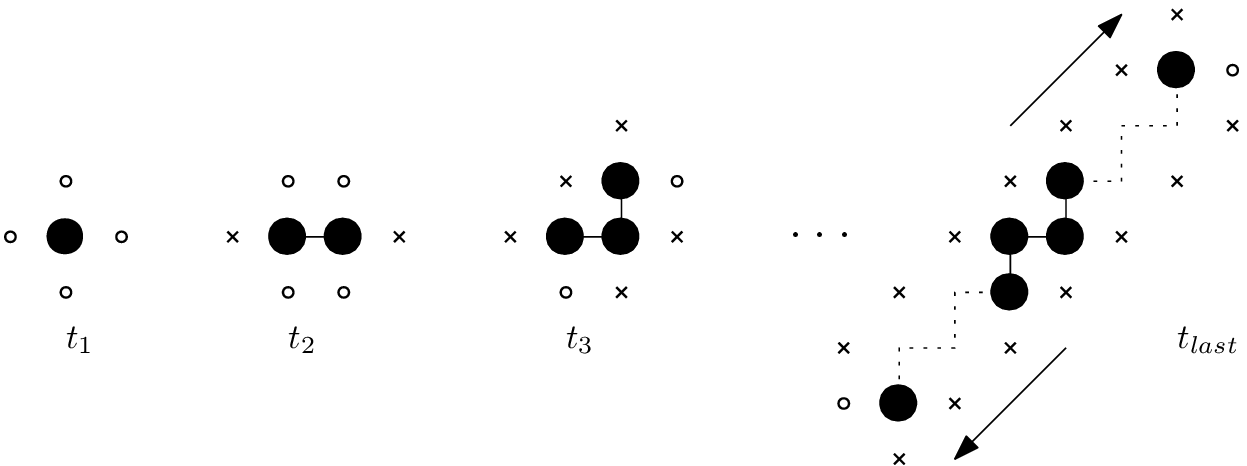}
\caption{Building a staircase in linear time-steps.}
\label{fig:linear-staircase}
\end{figure}

\begin{definition}[Exact-staircase]\label{def:exact-staircase}
An exact-staircase is a special case of staircase shape, in which each step consists of two nodes.
\end{definition}

\begin{proposition}\label{prop:excat-staircase-sublinear}
A rigidity-preserving or rigidity-breaking doubling operation cannot build an exact staircase shape $S$ within a sub-linear time.
\end{proposition}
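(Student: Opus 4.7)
The plan is to show that for either rigidity-preserving or rigidity-breaking doubling, every time-step of any valid constructor of an exact-staircase $S_F$ with $n$ nodes can add only $O(1)$ new nodes, so the number of time-steps must be $\Omega(n)$. The starting observation is geometric: in an exact-staircase each row and each column contains at most two consecutive nodes, hence $S_F$ contains no $3$-line and no $2\times 2$ square.

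For the rigidity-preserving case, I would invoke Proposition~\ref{prop:3-line-prop} directly: since any $3$-line generated during construction must persist in $S_F$ but $S_F$ has none, no intermediate shape in a valid construction may contain a $3$-line. I would then analyse a single east (analogously north) doubling in this ``no $3$-line'' regime. Any doubling node $u=(x,y)$ that lies in an existing horizontal $2$-line would, after generating its new east node, produce three consecutive nodes in row $y$; and if $u$ has an east neighbour in $S_t$, the pushed configuration creates a horizontal $3$-line in $S_{t+1}$ as well. Excluding these cases leaves only doubling sites at the two extremal ``ends'' of the current path-like shape, so at most a constant number of nodes can be added per time-step, mirroring the argument used in the staircase lower bound.

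For the rigidity-breaking case, the same geometric obstruction applies but one must work harder, because a $3$-line formed transiently can in principle be dissolved: by removing a bicolor edge during a subsequent doubling, part of the $3$-line can be translated away from the rest. However, such a split is only possible if the three collinear nodes lie on a cycle of $S_t$, as otherwise the edge removal would disconnect the shape and violate the connectivity invariant. My plan is to use an amortized argument based on a potential function (for instance, the number of corners already placed at positions consistent with $S_F$), showing that each cycle-based dissolution of a $3$-line must be charged against earlier operations used to build that cycle, so no net speed-up is possible. The main obstacle will be handling parallel doublings in the rigidity-breaking model: a careful case analysis is needed to bound the number of ``productive'' simultaneous doublings per time-step, exploiting the fact that a cycle-containing intermediate shape cannot be made too thin without immediately violating either connectivity or the ``no $3$-line'' requirement forced by $S_F$.
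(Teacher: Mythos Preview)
Your approach diverges substantially from the paper's, and for the rigidity-breaking case it is unnecessarily heavy and left incomplete.

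The paper does \emph{not} argue forward via a ``no $3$-line'' invariant. Instead it reasons backward from the last time-step $t_{last}$ and uses only connectivity. Suppose the operation at $t_{last}$ is (w.l.o.g.) a north doubling and some \emph{internal} node $u$ of the final exact-staircase generates its north neighbour $v$. Then $v$ was absent at $t_{last-1}$. In an exact-staircase every internal node is a cut vertex, so the portion of $S_F$ lying above $u$ is connected to the portion below only through $v$. If at $t_{last-1}$ the upper portion were already connected to $u$ (necessarily from above), then generating $v$ would have translated that whole upper portion one row north, so it could not land in its correct final position; hence the upper portion was \emph{not} connected to $u$ at $t_{last-1}$, contradicting the connectivity requirement. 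Therefore only the two endpoints of the staircase can be involved at $t_{last}$, the shape grows by at most two nodes per step, and $t_{last}\ge \lfloor n/2\rfloor$.

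The point you are missing is that this argument is indifferent to whether bicolor edges are allowed to break: it uses only that every intermediate shape is connected, which holds in \emph{both} the rigidity-preserving and the rigidity-breaking models. Your potential-function plan for the rigidity-breaking case---charging dissolved $3$-lines against cycle-building steps---is therefore not needed, and as written it is only a sketch with acknowledged unresolved obstacles (parallel doublings, thinness versus connectivity). Your forward $3$-line argument for the rigidity-preserving case is workable and is essentially the mechanism behind the paper's \emph{previous} proposition on staircases with steps of length at least three; but for exact-staircases the cut-vertex/connectivity observation gives a cleaner and uniform proof for both operation types.
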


\begin{proof}
Assume that the last time-step $t_{last}$ of a rigidity-preserving doubling operation will lead to an exact-staircase $S$ of $n$ nodes. We claim that this only occurred if only the two endpoints of the staircase is involved in this operation at $t_{last}$.
To prove the above claim, assume now, for the sake of contradiction, that at least one internal node $u$ was involved in $t_{last}$ of a \emph{north} rigidity-preserving doubling operation, and the final shape constructed was an exact staircase.
Then, node $v$ must have been the node generated by $u$ at step $t_{last}$ and $v$ did not exist in $t_{last-1}$. But this implies that the part of the staircase above $u$ could not have been connected to $u$ in $t_{last-1}$, as shown in Fig.~\ref{fig:excat-staircase-sublinear} (a), because if it were connected above $u$, then the generation of $v$ would have pushed it one row to the north as illustrated in Fig.~\ref{fig:excat-staircase-sublinear} (b). As this does not hold, we must assume that the top part of the staircase cannot have been connected to $u$, and this contradict the fact that the assumed type of constructions cannot break connectivity.

As a consequence, this contradicts the assumption that the internal node $u$ may have involved in $t_{last}$. 
Therefore, only the two endpoints of the staircase can generate in $t_{last}$, as a result, in every step the shape can only grow by at most $2$, giving a $t \geq \lfloor(n/2)\rfloor$.
\qed
\end{proof}

\begin{figure}[ht]
\centering 
\parbox{5cm}{
\includegraphics[width=5cm]{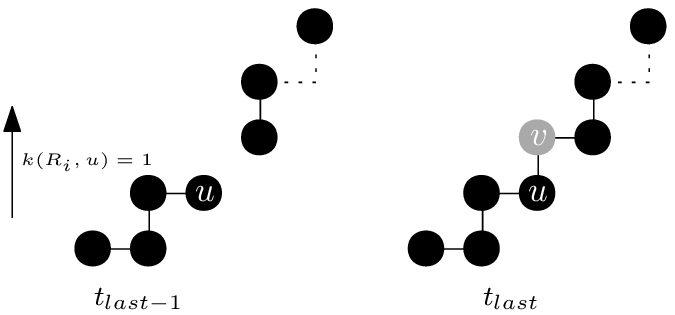}
\caption*{(a)}}
\qquad
\parbox{5cm}{
\includegraphics[width=5cm]{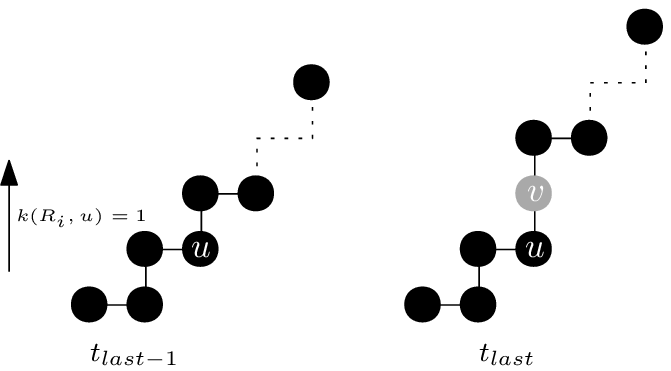}
\caption*{(b)}}
\caption{An explanation of Proposition~\ref{prop:excat-staircase-sublinear}.}
\label{fig:excat-staircase-sublinear}
\end{figure}

\begin{obs}
Any construction that rigidity-preserving  doubling operation performs can also be constructed by rigidity-breaking  doubling but not the opposite.
\end{obs}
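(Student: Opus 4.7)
The plan is to read the observation off Definitions~\ref{def:rigid-partial-doubling} and~\ref{def:non-rigid-partial-doubling} in two moves: first an inclusion of rigidity-preserving constructors into rigidity-breaking ones, then the strictness of that inclusion.

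For the inclusion, I would argue that a single rigidity-preserving step is literally the special case of a rigidity-breaking step in which, for every bicolor edge $e$ associated with the active doubling edge $uv$, the choice ``generate a node over $e$'' is taken rather than the alternative ``remove $e$''. Thus, given a rigidity-preserving constructor $\sigma=(o_1,\ldots,o_t)$ of $S_F$ from $S_I$, I would define a rigidity-breaking constructor $\sigma'$ by keeping the same direction and doubling node(s) at every step and always selecting the ``generate'' alternative. A short induction on the step index shows that $\sigma'$ produces the same sequence of intermediate shapes as $\sigma$, and in particular the same final shape $S_F$.

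For the strict part, I would produce a witness pair $(S_I,S_F)$ such that $S_F$ is reachable from $S_I$ by a rigidity-breaking constructor but by no rigidity-preserving one. The decisive asymmetry is that rigidity-breaking permits \emph{deleting} bicolor edges, and Proposition~\ref{prop:3-line-prop} tells us that rigidity-preserving operations can never break a $3$-line once it has appeared. Concretely, I would take $S_I$ to be a horizontal $3$-line together with one node attached below its middle cell, and I would take $S_F$ to be the shape obtained by applying a single north rigidity-breaking doubling at the bottom node in which the two bicolor edges of the cut above the $3$-line are deleted rather than replaced by new nodes. The resulting $S_F$ no longer contains the original horizontal $3$-line, yet is reachable from $S_I$ in one rigidity-breaking step.

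The main obstacle will be phrasing the last point with sufficient care: I must argue not only that the particular rigidity-breaking step I exhibit is not itself a rigidity-preserving step, but that \emph{no} rigidity-preserving constructor whatsoever realizes the chosen $S_F$ from the chosen $S_I$. Here I would apply Proposition~\ref{prop:3-line-prop} as a global invariant on every rigidity-preserving trajectory out of $S_I$: since $S_I$ already contains a $3$-line, that $3$-line must persist throughout any such trajectory, hence in particular in its final shape; but $S_F$ lacks this $3$-line, so $S_I \not\rightsquigarrow S_F$ under rigidity-preserving operations, completing the separation.
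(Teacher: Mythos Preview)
The paper states this as an observation without proof, so your attempt must stand on its own. The inclusion direction is fine: by Definitions~\ref{def:rigid-partial-doubling} and~\ref{def:non-rigid-partial-doubling} a rigidity-preserving step is precisely the rigidity-breaking step in which the ``generate'' option is taken at every bicolor edge, and your induction is routine.

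The strictness witness, however, does not work. Take your $S_I=\{(0,1),(1,1),(2,1),(1,0)\}$ and double $u=(1,0)$ north toward $v=(1,1)$. Since $u$ is a leaf, $S(u)=\{u\}$, hence $S'(u)=\{u\}$ and $S'(v)$ is the whole $3$-line. The only edge in the $(R_0,R_1)$ cut is $uv$ itself; there are \emph{no} additional bicolor edges associated with this doubling, so there is nothing to delete and the rigidity-breaking step coincides with the rigidity-preserving one. The $3$-line simply shifts up intact. Bicolor edges in the sense of Definition~\ref{def:partial-doubling} can only appear when a cycle through $u$ and $v$ crosses the cut a second time, which your tree-shaped $S_I$ cannot supply. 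A witness that does work is the $2\times 2$ square $S_I$: double its bottom-left corner east toward its bottom-right neighbour and, in the rigidity-breaking variant, delete the single bicolor edge on the top row; this yields a $5$-node $S_F$. Since every rigidity-preserving step adds at least one node, $S_F$ would have to be reached in exactly one such step, but each one-step rigidity-preserving outcome of size $5$ from the square is the square plus a single protruding node, and none of these equals $S_F$. If you still want to route the argument through Proposition~\ref{prop:3-line-prop}, you must start from an $S_I$ containing a cycle so that a genuine bicolor edge exists, and you must track the three \emph{specific} nodes of the original $3$-line rather than merely the existence of some $3$-line in $S_F$.
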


Next, by putting together the universal linear-time-steps constructor of Proposition \ref{prop:linear-construction} for doubling and the logarithmic-time-steps constructor of Theorem \ref{the:RC-characterization}  (corresponding to Algorithm \ref{alg:trans}) for RC doubling, we get the following general and faster constructor for doubling.

\begin{theorem} \label{the:doubling-general-constructor1}
Given any connected target shape $S_F$, there is an $[O(|B(S_F)|)+O(\log |S_F|)]$-time-step constructor of $S_F$ from $S_I=\{u_0\}$ through doubling operations. Moreover, there is a polynomial-time algorithm computing such a constructor on every input $S_F$. 
\end{theorem}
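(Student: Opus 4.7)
The strategy is to split the construction into two phases and invoke the two tools mentioned just before the statement. In phase one, we grow the baseline shape $B(S_F)$ from the singleton $\{u_0\}$ via Proposition~\ref{prop:linear-construction}; in phase two, we inflate $B(S_F)$ into $S_F$ via the RC-doubling constructor of Theorem~\ref{the:RC-characterization} (realised by Algorithm~\ref{alg:trans}). Since RC doubling is a special case of (general) doubling and so is the BFS-style single-node doubling used inside the proof of Proposition~\ref{prop:linear-construction}, concatenating the two sub-sequences is a legitimate doubling constructor from $\{u_0\}$ to $S_F$.

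For phase one, I would first compute $B(S_F)$ from $S_F$ using Algorithm~\ref{alg:baseline-shape}; this shape is unique by Proposition~\ref{prop:one-Baseline} and is connected whenever $S_F$ is, since the column/row compressions that define $B$ preserve connectivity. I would then translate $B(S_F)$ so that $u_0$ is one of its nodes and apply Proposition~\ref{prop:linear-construction} with $S_I=\{u_0\}\subseteq B(S_F)$ and target $B(S_F)$. This yields a doubling constructor of $O(|B(S_F)|)$ time-steps that generates the nodes of $B(S_F)$ level by level along a BFS tree rooted at $u_0$. Each such generation places a new node into an empty, adjacent position, so it is a plain single-node doubling requiring no translation of any existing sub-shape, and in particular it is a valid doubling operation in the sense of Section~\ref{sec:doubling}.

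For phase two, observe that $B(S_F)$ is itself a baseline shape, so $B(B(S_F))=B(S_F)$, and for every column $C$ and row $R$ of $B(S_F)$ we trivially have $M_{S_F}(C)\geq 1=M_{B(S_F)}(C)$ and likewise for rows; hence the hypotheses of Theorem~\ref{the:RC-characterization} are satisfied with initial shape $B(S_F)$ and target $S_F$. Lemma~\ref{lemma:algo-running-time} then guarantees that Algorithm~\ref{alg:trans} returns an RC-doubling constructor of length at most $2\log |S_F|$. Appending this sequence to the phase-one constructor produces a doubling constructor of $S_F$ from $\{u_0\}$ of total length $O(|B(S_F)|)+O(\log |S_F|)$, as required.

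The polynomial-time algorithmic claim follows because the overall procedure is simply the sequential composition of Algorithm~\ref{alg:baseline-shape}, the BFS-based construction of Proposition~\ref{prop:linear-construction}, and Algorithm~\ref{alg:trans}, each of which runs in polynomial (indeed linear) time in $|S_F|$ (the RC algorithm's running time is given by Theorem~\ref{the:poly-algo}). The main point that deserves care is the ``interface'' between the two phases: the shape output by phase one must be literally $B(S_F)$ placed at the correct coordinates, not merely some shape with the same baseline, so that phase two's reachability result applies verbatim. This is immediate from the construction, since phase one grows $B(S_F)$ node by node from $u_0$ at positions fixed in advance by the BFS schedule.
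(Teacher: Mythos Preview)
Your proposal is correct and follows essentially the same two-phase approach as the paper: first construct $B(S_F)$ from the singleton in $O(|B(S_F)|)$ time-steps via Proposition~\ref{prop:linear-construction}, then inflate $B(S_F)$ to $S_F$ in $O(\log|S_F|)$ time-steps via the RC-doubling constructor of Theorem~\ref{the:RC-characterization}/Lemma~\ref{lemma:algo-running-time}, using that RC doubling is a special case of doubling. Your write-up is in fact more careful than the paper's, explicitly verifying the hypotheses of Theorem~\ref{the:RC-characterization} for the pair $(B(S_F),S_F)$ and noting the connectivity and coordinate-alignment issues at the interface between the two phases.
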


\begin{proof}
By Proposition \ref{prop:linear-construction}, the baseline $B(S_F)$ of $S_F$ can be constructed from a singleton $S_I$ within $O(|B(S_F)|)$ time-steps through doubling operations. Then, by Theorem \ref{the:RC-characterization}, $S_F$ can be constructed from its baseline $B(S_F)$ within $O(\log |S_F|-|B(S_F)|))$ time-steps, yielding $O(\log |S_F|)$ time-steps in the worst-case, through RC doubling operations. As RC doubling is a special case of doubling, the latter constructor is also one using doubling operations, thus, the theorem follows. 
\qed
\end{proof}

The constructor of Theorem \ref{the:doubling-general-constructor1} is fast as a function of $n=|S_F|$, when $|S_F|-|B(S_F)|$ is large. For example, for all $S_F$ for which $|B(S_F)|=O(\log |S_F|)$ holds, it gives a logarithmic-time-steps constructor of $S_F$. It is also a fast constructor for all shapes $S_F$ that have a relatively small (geometrically) \emph{similar} shape $S_I$ under \emph{uniform scaling} (which is, in our case, equivalent to getting by full doubling in all directions from $S_I$ to $S_F$). Shape similarity is to be defined for orthogonal (i.e., rectilinear) shapes in a way analogous to its definition for general polygons: two shapes $S_I$ and $S_F$ being \emph{similar} if $S_I$ can be made equal (up to translations) to $S_F$ through uniform scaling. Note that shape similarity can be decided in linear time \cite{manacher76,akl78}. In such cases, $S_I$ can again be constructed in linear time-steps from a singleton, followed by a fast construction of $S_F$ from $S_I$ via full doubling in all directions in a round-robin way.

Finally, we give an alternative constructor, based on a partitioning of an orthogonal shape into the minimum number of rectangles. Note that there are efficient algorithms for the problem, e.g., an $O(n^{3/2}\log n)$-time algorithm \cite{imai1986efficient,keil2000polygon}. These algorithms, given an orthogonal polygon $S$, partition $S$ into the minimum number $h$ of rectangles $S_1,S_2,\ldots,S_h$, ``partition'' meaning a set of pairwise non-overlapping rectangles which are sub-polygons of $S$ and whose union is $S$.

\begin{theorem} \label{the:doubling-general-constructor2}
Given any connected target shape $S_F$, there is an $O(h\log |S_F|)$-time-step constructor of $S_F$ from $S_I=\{u_0\}$ through doubling operations, where $h$ is the minimum number of rectangles in which $S_F$ can be partitioned. Moreover, there is a polynomial-time algorithm computing such a constructor on every input $S_F$. 
\end{theorem}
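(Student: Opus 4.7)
The plan is to reduce the construction of $S_F$ to the construction of each rectangle in its minimum partition, each of which is a small enough task to be handled via tools already established earlier in the paper. First I would invoke the polynomial-time minimum rectangle-partition algorithm of \cite{imai1986efficient,keil2000polygon} to obtain rectangles $S_1,\ldots,S_h$ with $\bigcup_i S_i = S_F$ and pairwise non-overlapping interiors. Since $S_F$ is connected, the adjacency graph on these rectangles (with an edge between any two rectangles sharing at least one grid-edge of their boundaries) is connected, so I can extract a spanning tree $\mathcal{T}$ of it and root it at the rectangle $S_1$ containing $u_0$ (reindexing if necessary). Processing rectangles in a BFS order over $\mathcal{T}$ yields a sequence $S_1,S_2,\ldots,S_h$ in which every $S_i$, for $i\geq 2$, shares a boundary segment with a previously built parent $S_{p(i)}$.

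The construction then proceeds incrementally. For the base case, observe that the baseline of any $p\times q$ rectangle is a single node, so by Theorem~\ref{the:RC-characterization} and Lemma~\ref{lemma:algo-running-time}, $S_1$ is reachable from $\{u_0\}$ via RC doubling in $O(\log|S_F|)$ time-steps, and RC doubling is a special case of doubling. For the inductive step ($i\geq 2$), I would add $S_i$ to the built union $T_{i-1}=\bigcup_{j<i}S_j$ in two sub-phases: (a) one doubling operation applied to a node of $S_{p(i)}$ lying on the shared boundary with $S_i$ produces a seed $v$ at a corner of $S_i$'s region; (b) starting from $\{v\}$ and applying doubling operations restricted to the nodes of $S_i$ (a legal move since doubling allows any subset of nodes to double), I grow $v$ into the full rectangle $S_i$ in $O(\log|S_F|)$ additional time-steps, again by the RC-doubling construction of a rectangle from a singleton. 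Summing over all $h$ rectangles gives the claimed $O(h\log|S_F|)$ bound. All of the computations (partition, spanning tree, sequencing, per-rectangle constructor) are polynomial in $|S_F|$.

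The main obstacle I expect is step (b): arguing that the doubling operations that grow $S_i$ can be applied without disturbing $T_{i-1}$ through the rigidity mechanism and without occupying positions already occupied by $T_{i-1}$. The non-overlap of the partition rectangles guarantees that the intended positions of $S_i$ are empty in $T_{i-1}$, so the remaining concern is the translation effect: when nodes of $S_i$ double in a direction $d$, any node of $T_{i-1}$ lying immediately on the $d$-side of an $S_i$ node would be pushed. To sidestep this, I place the seed $v$ at the corner of $S_i$ adjacent to $S_{p(i)}$ and choose the two growth directions used to fill $S_i$ so that both point away from $S_{p(i)}$ and into the free region of $S_i$'s rectangle. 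A case analysis on the side (N/E/S/W) along which $S_i$ attaches to $S_{p(i)}$, combined with a DFS-style outward traversal of $\mathcal{T}$ that builds subtrees before returning to their roots, is what ensures that at the moment $S_i$ is grown no already-constructed rectangle lies on the growth side of $S_i$. This is the heart of the argument, and the place where the simple-connectedness of $S_F$ together with the non-overlap of the partition rectangles is crucial; once it is carried out, the time-step bound and the polynomial-time complexity of producing the constructor both follow immediately from the per-rectangle analysis.
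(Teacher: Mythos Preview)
Your approach is essentially the paper's: compute a minimum rectangle partition, take a spanning tree of the rectangle-adjacency graph, and grow the rectangles one after another from a corner seed using the logarithmic RC-doubling constructor for a rectangle from a singleton. The paper differs only in cosmetic choices---it roots the tree at a maximum-area rectangle and traverses by BFS, constructing all rectangles of a level in parallel, while noting that in the worst case this is still sequential and yields the same $O(h\log|S_F|)$ bound.

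Where you diverge is in the treatment of the ``pushing'' obstacle. The paper dispatches it in a single sentence, asserting that the choice of corner seeds together with the non-overlap of the partition ``ensure that these growth processes cannot overlap''; it does not carry out the case analysis you outline, and in particular it does not switch to a DFS order. Your instinct that this step deserves care is reasonable, but your proposed resolution overshoots: you invoke simple-connectedness of $S_F$, which is neither in the theorem statement nor assumed by the paper. The point you actually need (and which the paper is implicitly relying on) is that growing a rectangle from a corner toward the opposite corner, using the Lemma~\ref{lemma:algo-running-time} schedule, never generates or translates a node beyond the rectangle's own footprint; since that footprint is empty in $T_{i-1}$ by non-overlap, nothing in $T_{i-1}$ is ever the $d$-neighbor of a doubling node at the moment it doubles, regardless of how many already-built rectangles border $S_i$ on other sides. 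That observation removes the need for any special traversal order or topological assumption on $S_F$.
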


\begin{proof}
Given $S_F$ we use the $O(n^{3/2}\log n)$-time algorithm of \cite{imai1986efficient} to partition $S_F$ into rectangles. As all shapes considered in this paper are orthogonal polygons, $S_F$ is a valid input to the algorithm, thus, the algorithm returns a partition of $S_F$ consisting of the minimum number $h$ of rectangles, $S_1,S_2,\ldots,S_h$.

We define a graph $G'=(V',E')$ associated with those rectangles. In particular, $V'=\{S_1,S_2,\ldots,S_h\}$ and $E'=\{S_iS_j\;|\;S_i,S_j\in V'$ and a node $u\in S_i$ is an orthogonal neighbor to a node $v\in S_j\}$, that is, the graph $G'$ has a vertex for each rectangle and has an edge between two vertices iff the corresponding rectangles are vertically or horizontally adjacent. Note that $G'$ is a connected graph. If not, then $G'$ would consist of at least two connected components $G'_i$. Each $G'_i$ corresponds to a subset of the rectangles, so that each rectangle belongs to a single $G'_i$. Moreover, the rectangles corresponding to the vertices of $G'_i$ form a partition of a shape $W_i$ which is a subshape of $S_F$. These imply that $S_F=\bigcup_i W_i$, where the $W_i$s are pairwise disconnected shapes, thus contradicting connectivity of $S_F$. So, it must hold that $G'$ is connected.

Therefore, we can compute a spanning tree $T$ of $G'$, which we shall then use to define the constructor of $S_F$. Note that $T$ is a spanning tree of the corresponding rectangles, given their adjacency relation. We set as the root of $T$ a maximum-area rectangle $S'_0$ of the partition. 
The sought constructor is based on a BFS traversal of $T$ starting from $S'_0$. The constructor first constructs $S'_0$ through a constructor of Theorem~\ref{the:poly-algo}, in a number of time-steps logarithmic in $|S'_0|$. Then, for the set $N(S'_0)=\{S'\;|\;S'_0S'\in E'\}$ of rectangles adjacent to $S'_0$, we compute the set of points $c(S',S'_0)=\{(x,y)\;|\;S'\in N(S'_0)$ and $(x,y)$ are the coordinates of a corner-node of $S'$ which is adjacent to a node of $S'_0\}$. For every $S'\in N(S'_0)$ we set one of the points in $c(S',S'_0)$ as the initial point from which $S'$ will be constructed. Then we construct $S'$ directly in its final position by a constructor of Theorem~\ref{the:poly-algo}. Note that the selection of these points and the non-overlapping nature of the partition ensure that these growth processes cannot overlap. 

Whenever a rectangle $S'$ in the next level of the spanning tree can be constructed from more than one initial points (either adjacent to the same or to distinct rectangles of the previous level), then one of those points is chosen arbitrarily. All rectangles of a given level are constructed in parallel, thus the time-steps paid per level of $T$ are the time-steps associated with the largest rectangle of that level. The process continues in this way until the furthest level is constructed.

In the worst case, the $h$ rectangles are constructed sequentially, each, rather crudely, in $O(\log |S_F|)$ time-steps, for a total of $O(h\log |S_F|)$ time-steps.
\qed
\end{proof}
Observe that for those shapes $S_F$ for which $h$ is constant or $O(\log |S_F|)$, the constructor returned by Theorem~\ref{the:doubling-general-constructor2} is of logarithmic or polylogarithmic time-steps, respectively.

\section{Conclusion}\label{sec:conclusion}
In this work, we investigated the graph-growth mechanisms proposed by Mertzios \textit{et al.}~\cite{mertzios2021complexity} within 2D characteristics. Specifically, the growth operations explored in this study are \emph{Full doubling}, \emph{RC doubling} and \emph{doubling}. Two problems has been addressed: class characterization and {\sc ShapeConstruction}. Our main results show that, although studying class characterisation under the full doubling operation is straightforward, it enables us to understand the dynamics of growth processes in their most basic forms. 
In addition, we successfully developed a linear time algorithm that determines the feasibility of growing for any pair of shapes $(S_I,S_F)$ and returns the logarithmic-time-steps constructor. Further, we presented two universal constructors for large classes of shapes that are efficient up to polylogarithmic time-steps.

There is a number of interesting problems that are opened by this work. The obvious first target is to obtain the optimal constructor, and develop an algorithm that optimizes the running time. It would also be interesting to run a simulation for such growth processes or explore these operations within 3D system. Finally, one next step is to develop distributed versions of the growth operations described here.

\section*{Acknowledgements}
 We would like to thank Viktor Zamaraev for the many fruitful discussions during the various stages of developing this work.

\bibliography{bibliography}  
\bibliographystyle{splncs04}
\end{document}